\newcommand{\Yes}{\textsc{Yes}\xspace}
\newcommand{\No}{\textsc{No}\xspace}
\newcommand{\phisub}{\textsc{$\Phi$-Sub\-set}\xspace}
\newcommand{\phiext}{\textsc{$\Phi$-Ex\-ten\-sion}\xspace}
\newcommand{\fvsext}{\textsc{Feed\-back Ver\-tex Set Ex\-ten\-sion}\xspace}
\newcommand{\bitsize}{N}
\newcommand{\cF}{\mathcal{F}}
\newcommand{\cM}{\mathcal{M}}
\newcommand{\mif}{\texttt{mif}\xspace}
\newcommand{\rtfvs}{$O(1.7117^n)$\xspace}
\newcommand{\rtenumfvs}{$O(1.8527^n)$\xspace}
\newcommand{\rthittingset}{$O(1.6627^n)$\xspace}
\newcommand{\bs}{\backslash}
\newcommand{\wA}{\alpha_1}
\newcommand{\wB}{\alpha_2}
\newcommand{\wBeta}{w_{\beta}}
\newcommand{\wK}{w_k}
\newcommand{\sergeBeta}{\beta}
\newcommand{\sergeGamma}{\gamma}
\newtheorem{theorem}{Theorem}
\newtheorem{lemma}[theorem]{Lemma}
\newtheorem{definition}[theorem]{Definition}
\newtheorem{proposition}{Proposition}[section]
\newcommand{\pbDef}[3]{%
    \noindent
    \begin{center}
        \begin{boxedminipage}{0.99 \textwidth}
            {#1}
            \smallskip\\
            \begin{tabular}{lp{0.99 \textwidth - \widthof{~~~Question: }}}
                Input:&#2\\
                Question:&#3
            \end{tabular}
        \end{boxedminipage}
    \end{center}
}
\newcommand{\pbDefOpt}[3]{%
    \noindent
    \begin{center}
        \begin{boxedminipage}{0.99 \textwidth}
            {#1}
            \smallskip\\
            \begin{tabular}{lp{0.99 \textwidth - \widthof{~~~Output: }}}
                Input:&#2\\
                Output:&#3
            \end{tabular}
        \end{boxedminipage}
    \end{center}
}
\date{}
\title{Exact Algorithms via Multivariate Subroutines}
\author[1,2]{Serge Gaspers}
\author[1,2]{Edward J. Lee}
\affil[1]{UNSW Australia, Sydney, Australia.\\ \texttt{sergeg@cse.unsw.edu.au}, \texttt{e.lee@unsw.edu.au}}
\affil[2]{Data61, CSIRO, Sydney, Australia}
\begin{document}
    \maketitle

    \begin{abstract}
        We consider the family of \phisub problems, where the input consists of an instance $I$ of size $N$ over a universe $U_I$ of size $n$ and the task is to check whether the universe contains a subset with property $\Phi$ (e.g., $\Phi$ could be the property of being a feedback vertex set for the input graph of size at most $k$).
        Our main tool is a simple randomized algorithm which solves \phisub in time $(1+b-\frac{1}{c})^n N^{O(1)}$, provided that there is an algorithm for the \phiext problem with running time $b^{n-|X|} c^k N^{O(1)}$. Here, the input for \phiext is an instance $I$ of size $N$ over a universe $U_I$ of size $n$, a subset $X\subseteq U_I$, and an integer $k$, and the task is to check whether there is a set $Y$ with $X\subseteq Y \subseteq U_I$ and $|Y\setminus X|\le k$ with property $\Phi$.
        We also derandomize this algorithm at the cost of increasing the running time by a subexponential factor in $n$, and we adapt it to the enumeration setting where we need to enumerate all subsets of the universe with property $\Phi$.
        This generalizes the results of Fomin et al. [STOC 2016] who proved them for the case $b=1$.
        As case studies, we use these results to design faster deterministic algorithms for
        \begin{itemize}
            \item checking whether a graph has a feedback vertex set of size at most $k$,
            \item enumerating all minimal feedback vertex sets,
            \item enumerating all minimal vertex covers of size at most $k$, and
            \item enumerating all minimal 3-hitting sets.
        \end{itemize}
        We obtain these results by deriving new $b^{n-|X|} c^k N^{O(1)}$-time algorithms for the corresponding \phiext problems (or the enumeration variant). In some cases, this is done by simply adapting the analysis of an existing algorithm, in other cases it is done by designing a new algorithm. Our analyses are based on Measure and Conquer, but the value to minimize, $1+b-\frac{1}{c}$, is unconventional and leads to non-convex optimization problems in the analysis.
    \end{abstract}
    
\section{Introduction}

    In exponential-time algorithmics \cite{FominK10}, the aim is to design algorithms for NP-hard problems with the natural objective to minimize their running times.
    In this paper, we consider a broad class of subset problems, where for an input instance $I$ on a universe $U_I$, the question is whether there is a subset $S$ of the universe satisfying certain properties.
    For example, in the \textsc{Feedback Vertex Set} problem, the input instance consists of a graph $G=(V,E)$ and an integer $k$, the universe is the vertex set and the property to be satisfied by a subset $S$ is the conjunction of ``$|S|\le k$'' and ``$G-S$ is acyclic''.

    More formally, and using definitions from \cite{FominGLS16}, an \emph{implicit set system} is a function $\Phi$ that takes as input a string $I \in \{0,1\}^*$ and outputs a set system $(U_I, \cF_I)$, where $U_I$ is a universe  and $\cF_I$ is a collection of subsets of $U_I$.  The string $I$ is referred to as an \emph{instance}  and we denote by $|U_I| = n$ the size of the universe and by $|I|=\bitsize$ the size of the instance. 
    We assume that $\bitsize\ge n$. The implicit set system $\Phi$ is \emph{polynomial time computable} if (a) there exists a polynomial time algorithm that given $I$ produces $U_I$, and (b) there exists a polynomial time algorithm that given $I$, $U_I$ and a subset $S$ of $U_I$ determines whether $S \in \cF_I$.
    All implicit set systems discussed in this paper are polynomial time computable.

    \pbDefOpt{\phisub}{An instance $I$}{A set $S \in {\cal F}_I$ if one exists.}
    \pbDef{\phiext}{An instance $I$, a set $X \subseteq U_I$, and an integer $k$.}
    {Does there exists a subset $S \subseteq (U_I \setminus X)$ such that $S \cup X \in {\cal F}_I$ and $|S| \leq k$?}

    In recent work, Fomin et al. \cite{FominGLS16} showed that $c^k N^{O(1)}$ time algorithms ($c\in O(1)$) for \phiext lead to competitive exponential-time algorithms for many \phisub problems. The main tool was a simple randomized algorithm which solves \phisub in time $(2-\frac{1}{c})^n N^{O(1)}$ if there is an algorithm that solves \phiext in time $c^k N^{O(1)}$. A derandomization was also given, turning the randomized algorithm into a deterministic one at the cost of a $2^{o(n)}$ factor in the running time. The method was also adapted to enumeration algorithms and combinatorial upper bounds.
    This framework, together with a large body of work in parameterized algorithmics \cite{downey2013fundamentals}, where $c^k N^{O(1)}$ time algorithms are readily available for many subset problems, led to faster algorithms for around 30 decision and enumeration problems.

    In this paper, we extend the results of Fomin et al. \cite{FominGLS16} and show that a $b^{n-|X|} c^k N^{O(1)}$ time algorithms ($b,c\in O(1)$) for \phiext lead to randomized $(1+b-\frac{1}{c})^n N^{O(1)}$ time algorithms for \phisub. Our result can be similarly derandomized and adapted to the enumeration setting. Observe that for $b=1$, the results of \cite{FominGLS16} coincide with ours, but that ours have the potential to be more broadly applicable and to lead to faster running times. The main point is that if we use a $c^k N^{O(1)}$ time algorithm as a subroutine to design an algorithm exponential in $n$, we might as well allow a small exponential factor in $n$ in the running time of the subroutine.

    Similar as in \cite{FominGLS16}, the \phiext problem can often be solved by preprocessing the elements in $X$ in a simple way and then using an algorithm for a subset problem. In the case of \textsc{Feedback Vertex Set}, the vertices in $X$ can simply be deleted from the input graph. Whereas the literature is rich with $c^k N^{O(1)}$ time algorithms for subset problems, algorithms with running times of the form $b^{n} c^k N^{O(1)}$ with $b>1$ are much less common.%
    \footnote{One notable exception is by Eppstein \cite{eppstein2003small}, who showed that all maximal independent sets of size at most $k$ in a graph on $n$ vertices can be enumerated in time $(4/3)^n (81/64)^k n^{O(1)}$.}
    One issue is that there is, in general, no obviously best trade-off between the values of $b$ and $c$ for such algorithms. However, the present framework gives us a precise objective: we should aim for values of $b$ and $c$ that minimize the base of the exponent, $(1+b-\frac{1}{c})$.

    Our applications consist of three case studies centered around some of the most fundamental problems considered in \cite{FominGLS16}, feedback vertex sets and hitting sets. For the first case study, we considered the \textsc{Feedback Vertex Set} problem: given a graph $G$ and an integer $k$, does $G$ have a feedback vertex set of size at most $k$?
    For this problem, we re-analyze the running time of the algorithm from \cite{fomin2008minimum}. In \cite{fomin2008minimum,gaspers2010exponential}, the algorithm was analyzed using Measure and Conquer: using a measure that is upper bounded by $\alpha n$ and aiming for a running time of $2^{\alpha n} n^{O(1)}$ the analysis of the branching cases led to constraints lower bounding the measure and the objective was to minimize $\alpha$ subject to these constraints. In our new analysis, we add an additive term $w_k\cdot k$ to the measure and adapt the constraints accordingly. If all constraints are satisfied, we obtain a running time of $2^{\alpha n+w_k k} n^{O(1)}$. Our framework naturally leads us to minimize $2^\alpha-2^{-w_k}$. This approach leads to a $O(1.5422^n \cdot 1.2041^k)$ time algorithm, which, combined with our framework gives a deterministic \rtfvs time algorithm for \textsc{Feedback Vertex Set}. This improves on previous results giving $O(1.8899^n)$ \cite{razgon2006exact}, $O(1.7548^n)$ \cite{fomin2008minimum}, $O(1.7356^n)$ \cite{xiao2013exact}, $O(1.7347^n)$ \cite{fomin2015large}, and $O(1.7216^n)$ \cite{FominGLS16} time algorithms for the problem. We note that adapting the analysis of other existing exact and parameterized algorithms did not give faster running times. Also, if we allow randomization, the $O(1.6667^n)$ time algorithm by \cite{FominGLS16} (which can also be achieved using our framework) remains fastest.

    Our second case study is more involved. Simply using an existing algorithm and adapting the measure was not sufficient to improve upon the best known enumeration algorithms (and combinatorial upper bounds) for minimal feedback vertex sets.
    Here, the task is, given a graph $G$, to output each feedback vertex set that is not contained in any other feedback vertex set.
    We design a new algorithm for enumerating all minimal feedback vertex sets. We also need a new combinatorial upper bound for the number of minimal vertex covers of size at most $k$ to handle one special case in the enumeration of minimal feedback vertex sets.%
    \footnote{Previous work \cite{Byskov04,eppstein2003small} along these lines focused on small maximal independent sets in the context of graph coloring, whose bounds were insufficient for our purposes. Here, we need better bounds on large maximal independent sets or small minimal vertex covers.}
    We obtain a $O(1.7183^n \cdot 1.1552^k)$ time algorithm for enumerating all minimal feedback vertex sets. Our framework thus leads to a running time of \rtenumfvs, improving on the previous best bound of $O(1.8638^n)$ \cite{fomin2008minimum}.
    The current best lower bound for the number of minimal feedback vertex sets is $O(1.5926^n)$ \cite{fomin2008minimum}.
    We would like to highlight that the enumeration of minimal feedback vertex sets is completely out of scope for the more restricted framework of \cite{FominGLS16}: the number of minimal feedback vertex sets of size at most $k$ cannot be upper bounded by $c^k n^{O(1)}$, as evidenced by a disjoint union of $k$ cycles of length $n/k$.

    Our last case study gives a new algorithm for enumerating all minimal 3-hitting sets, also known as minimal transversals of rank-3 hypergraphs. These are minimal sets $S$ of vertices of a hypergraph where each hyperedge has size at most 3 such that every hyperedge contains at least one vertex of $S$. We re-analyze an existing algorithm \cite{cochefert2016faster} for this enumeration problem, adapting the measure in a similar way as in the first case study, and we obtain a multivariate running time of $O(1.5135^n \cdot 1.1754^k)$, leading to an \rthittingset time enumeration algorithm. This breaks the natural time bound of $O(1.6667^n)$ of the previously fastest algorithm \cite{FominGLS16}. The current best lower bound gives an infinite family of rank-3 hypergraphs with $\Omega(1.5848^n)$ minimal transversals \cite{cochefert2016faster}.

\section{Preliminaries}
    Let $G = (V,E)$ be a graph with a set of vertices $V$ and a set of edges 
    $E \subseteq \{uv : u, v \in V\}$.
    The \emph{degree} $d(u)$ of a vertex $u$ is the number of neighbors of $u$ in $G$. The \emph{degree} of a graph $\Delta(G)$ is the maximum $d(u)$ across all $u \in V$. A graph $G' = (V', E')$ is a \emph{subgraph} of $G$ if $V' \subseteq V$ and $E' \subseteq E$ and $G'$ is an \emph{induced subgraph} of $G$ if, in addition, $G$ has no edge $uv$ with $u,v\in V'$ but $uv\notin E'$. In this case, we also denote $G'$ by $G[V']$.
    A forest is an acyclic graph.
    A subset $F\subseteq V$ is acyclic if $G[F]$ is a forest.
    An acyclic subset $F\subseteq V$ is \emph{maximal} in $G$ if it is not a subset of any other acyclic subset.
    For an acyclic subset $F\subseteq V$, we denote the set of maximal acyclic supersets of $F$ as $\cM_G(F)$ and the set of maximum (i.e., largest) acyclic supersets of $F$ as $\cM^*_G(F)$.

    Let $T$ be a subgraph of $G$. Define Id$(T,t)$ as an operation on $G$ which contracts all edges of $T$ into one vertex $t$, removing induced loops. This may create multiedges in $G$. Define Id$^*(T,t)$ as the operation Id$(T,t)$ followed by removing all vertices connected to $t$ by multiedges.
    A \emph{non-trivial} component of a graph $G$ is a connected component on at least two vertices.
    The following propositions from \cite{fomin2008minimum} will be useful.
    \begin{proposition}{\textsc{\cite{fomin2008minimum}}}
        \label{prop:1}
        Let $G = (V, E)$ be a graph, $F \subseteq V$ be an acyclic subset of vertices and $T$ be a non-trivial component of $G[F]$. Denote by $G'$ the graph obtained from $G$ by the operation Id$^*(T,t)$ and let $F' = F \cup \{t\} \bs T$. Then for $X' = X \cup \{t\} \bs T$ where $X, X' \subseteq V$
        \begin{itemize}
             \item $X \in \cM_G(F)$ if and only if $X' \in \cM_{G'}(F')$, and
             \item $X \in \cM^*_{G}(F)$ if and only if $X' \in \cM^*_{G'}(F')$.
         \end{itemize} 
    \end{proposition}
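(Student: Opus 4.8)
The plan is to exhibit an explicit bijection $\phi$ between the acyclic supersets of $F$ in $G$ and the acyclic supersets of $F'$ in $G'$ that is simultaneously an order isomorphism (for inclusion) and shifts cardinalities by the constant $|T|-1$. Both bullets then follow at once: an order isomorphism maps maximal elements to maximal elements, and a cardinality shift by a constant maps maximum‑cardinality elements to maximum‑cardinality elements.

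Before defining $\phi$, I would record two structural facts. First, since $G[F]$ is acyclic and $T$ is one of its connected components, $G[T]$ is a tree with exactly $|T|-1$ edges; hence, for any $W$ with $T \subseteq W$, the only edges of $G[W]$ that become loops under the contraction Id$(T,t)$ are these $|T|-1$ tree edges, while every vertex $v \in W \setminus T$ with two or more neighbours in $T$ ends up joined to $t$ by a multiedge. Second, the vertices deleted by Id$^*(T,t)$ beyond those deleted by Id$(T,t)$ are exactly the vertices of $R := \{v \in V \setminus T : v \text{ has} \ge 2 \text{ neighbours in } T\}$, and no acyclic superset of $F$ can meet $R$: if $v \in R$ then $T \cup \{v\}$ already induces a cycle (take a path in the tree $T$ between two neighbours of $v$ and close it through $v$). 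In particular $R \cap F = \emptyset$, so $F' \subseteq V(G')$; and for every acyclic $X$ with $F \subseteq X \subseteq V$ we get $X \cap R = \emptyset$ and $X' := (X \setminus T) \cup \{t\} \subseteq V(G')$.

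The crux is the claim that, for every $W$ with $F \subseteq W \subseteq V$ and $W \cap R = \emptyset$, the graph $G[W]$ is a forest if and only if $G'[W']$ is, where $W' = (W \setminus T) \cup \{t\}$. I would argue via the circuit rank $\beta(H) := |E(H)| - |V(H)| + c(H)$, where $c(H)$ is the number of connected components; a (multi)graph is a forest iff its circuit rank is $0$. Because $W \cap R = \emptyset$, the graph $G'[W']$ is precisely what one obtains from $G[W]$ by contracting the connected subgraph $T$ to the single vertex $t$ and discarding the loops thus created. This contraction decreases the vertex count by $|T|-1$, decreases the edge count by exactly $|T|-1$ (only the tree edges of $T$ vanish, as loops; the multiedges to $t$ are kept), and leaves $c(\cdot)$ unchanged since $T$ is connected; so $\beta$ is unchanged, proving the claim. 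Consequently $\phi : X \mapsto (X \setminus T) \cup \{t\}$, with inverse $X' \mapsto (X' \setminus \{t\}) \cup T$, is a bijection between $\{X : F \subseteq X \subseteq V,\ G[X]\text{ a forest}\}$ and $\{X' : F' \subseteq X' \subseteq V(G'),\ G'[X']\text{ a forest}\}$; it is clearly monotone with monotone inverse, and $|\phi(X)| = |X| - |T| + 1$.

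Finally I would read off the proposition. Since $\phi$ is an order isomorphism between the posets of acyclic supersets of $F$ in $G$ and of $F'$ in $G'$ (both ordered by inclusion), $X$ is maximal iff $X' = \phi(X)$ is, giving $X \in \cM_G(F) \iff X' \in \cM_{G'}(F')$; since additionally $|\phi(X)|$ is an increasing function of $|X|$, $X$ has maximum cardinality among acyclic supersets of $F$ iff $X'$ does among those of $F'$, giving $X \in \cM^*_G(F) \iff X' \in \cM^*_{G'}(F')$. The step I expect to need the most care is the circuit‑rank bookkeeping in the crux: one must correctly separate the edges of $G[W]$ that become deleted loops from those that become retained multiedges, and verify that excluding $R$ is exactly what makes $G'[W']$ equal to the contraction of $T$ inside $G[W]$.
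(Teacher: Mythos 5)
Your proof is correct. Note that the paper itself does not reprove Proposition~\ref{prop:1}: it is cited directly from Fomin, Gaspers, Pyatkin and Razgon (2008), so there is no in-paper proof to compare against. Evaluating your argument on its own terms: you correctly identify that the vertices removed by $\text{Id}^*$ beyond $\text{Id}$ are exactly $R$, that no acyclic superset of $F$ (hence of $T$) can meet $R$, and that on the complement of $R$ the operation $\text{Id}^*(T,t)$ restricted to $W' = (W\setminus T)\cup\{t\}$ coincides with contracting the tree $G[T]$ inside $G[W]$ and deleting the $|T|-1$ resulting loops. The circuit-rank bookkeeping ($\Delta|V| = \Delta|E| = -(|T|-1)$, $c$ unchanged, so $\beta$ is invariant) is a clean way to get the ``forest iff forest'' equivalence, slightly more algebraic than the cycle-chasing one usually sees in such proofs, but equivalent in content. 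The map $\phi(X) = (X\setminus T)\cup\{t\}$ is then an inclusion-preserving bijection with $|\phi(X)| = |X| - |T| + 1$, from which both bullets follow. One small caveat worth flagging (a defect of the proposition's phrasing, not your proof): the statement quantifies over arbitrary $X, X' \subseteq V$, but the map $X \mapsto (X\cup\{t\})\setminus T$ is not injective on all of $2^V$; the ``iff'' is really only meaningful for $X \supseteq F$, which is exactly the regime your bijection covers and the only one ever used.
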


    \noindent
    Using operation Id$^*$ on each non-trivial component of $G[F]$, results in an independent set $F'$.


    \begin{proposition}{\textsc{\cite{fomin2008minimum}}}
        \label{prop:3}
        Let $G = (V , E)$ be a graph and $F$ be an independent set in $G$ such that $V \bs F = N(t)$ for some $t \in F$. Consider the graph $G' = G[N(t)]$ and for every pair of vertices $u,v \in  N(t)$ having a common neighbor in $F \bs \{t\}$ add an edge $uv$ to $G'$. Denote the obtained graph by $H$ and let $I \subseteq N(t)$. Then $F \cup I \in \cM_G(F)$ if and only if $I$ is a maximal independent set in $H$ . In particular, $F \cup I \in \cM^*_G(F)$ if and only if $I$ is a maximum independent set in $H$.
    \end{proposition}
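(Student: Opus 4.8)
The plan is to reduce the proposition to the single combinatorial equivalence that, for every $I \subseteq N(t)$, the induced subgraph $G[F \cup I]$ is a forest if and only if $I$ is an independent set in $H$. Granting this, both statements of the proposition follow by a short order-theoretic argument. Since $V \setminus F = N(t)$, every subset of $V$ containing $F$ is of the form $F \cup I$ for a unique $I \subseteq N(t)$, and since $V(H) = N(t)$ and $F \cap N(t) = \emptyset$, the map $I \mapsto F \cup I$ is a bijection from the independent sets of $H$ onto the acyclic supersets of $F$ in $G$ which preserves inclusion in both directions. An isomorphism of posets carries maximal elements to maximal elements, so $I$ is a maximal independent set of $H$ if and only if $F \cup I \in \cM_G(F)$; and because $|F \cup I| = |F| + |I|$ with $|F|$ fixed, it also carries maximum-cardinality elements to maximum-cardinality elements, so $I$ is a maximum independent set of $H$ if and only if $F \cup I \in \cM^*_G(F)$. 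Hence all the work is in the equivalence.

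For the ``only if'' direction I would prove the contrapositive. Suppose $u, v \in I$ are adjacent in $H$. By the construction of $H$, either $uv \in E$, or $u$ and $v$ have a common neighbour $w \in F \setminus \{t\}$. In the first case $\{t, u, v\}$ induces a triangle, using that $t$ is adjacent to every vertex of $N(t) \supseteq I$. In the second case the four distinct vertices $u, t, v, w$ form a $4$-cycle (they are distinct since $w \in F$ while $u, v \in N(t) = V \setminus F$), and all of them lie in $F \cup I$. Either way $G[F \cup I]$ contains a cycle, so it is not a forest.

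For the ``if'' direction, assume $I$ is independent in $H$. Then $I$ is independent in $G[N(t)]$, hence in $G$, and as $F$ is independent as well, $G[F \cup I]$ is bipartite with parts $F$ and $I$. If it contained a cycle, the cycle would have the form $u_1 w_1 u_2 w_2 \cdots u_\ell w_\ell u_1$ with $\ell \ge 2$, the $u_i \in I$, the $w_i \in F$, all $2\ell$ vertices distinct, and each $w_i$ adjacent to both $u_i$ and $u_{i+1}$ (indices modulo $\ell$). Since the $w_i$ are pairwise distinct, at most one of them equals $t$, so some $w_i$ lies in $F \setminus \{t\}$ and is a common neighbour of the distinct vertices $u_i, u_{i+1} \in I$; by the construction of $H$ this gives $u_i u_{i+1} \in E(H)$, contradicting independence of $I$. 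Therefore $G[F \cup I]$ is acyclic.

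I do not expect a genuine obstacle; the only step that needs care is the last paragraph, where one must verify that bipartiteness really forces the alternating cycle structure with all vertices distinct, and that ``at most one $w_i$ equals $t$'' together with $\ell \ge 2$ does produce a usable common neighbour in $F \setminus \{t\}$.
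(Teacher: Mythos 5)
The paper states Proposition~\ref{prop:3} citing \cite{fomin2008minimum} and does not reproduce a proof here, so there is no internal argument to compare against; your proof is correct and is the natural one. The reduction to the single equivalence that $G[F\cup I]$ is acyclic iff $I$ is independent in $H$, the triangle and $C_4$ constructions for the contrapositive, the bipartite alternating-cycle argument together with the observation that at most one $w_i$ can equal $t$, and the transfer of maximality and maximum cardinality through the inclusion-preserving bijection $I\mapsto F\cup I$ are all sound and close every gap.
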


    For an acyclic subset $F$, a so-called \emph{active} vertex $t\in F$ and a neighbor $v\in N(t)\setminus F$, we will now define the concept of generalized neighbors of $v$, as well as their generalized neighbors.
    Denote by $K$ the set of vertices of $F$ adjacent to $v$ other than $t$. Let $G'$ be the graph obtained after the operation Id$(K \cup \{v\}, u)$. A vertex $w \in V(G') \backslash \{t\}$ is a \textit{generalized neighbor} of $v$ in $G$ if $w$ is a neighbor of $u$ in $G'$. Denote by gd$(v)$ the \textit{generalized degree} of $v$ which is its number of generalized neighbors.
    For a given generalized neighbor $x$ of $v$, denote by $K'$ the set of vertices in $F$ adjacent to $x$. Denote $G''$ as the graph obtained after the operation Id$(K' \cup \{x\}, u')$. A \emph{generalized neighbor} of $x$ is any vertex $y \in V(G'') \bs \{v\}$ which is adjacent to $u'$ in $G''$. Also use the notation gd$(x)$ to represent the \emph{generalized degree} of $x$, which is a very similar notion to that of gd$(v)$.
    Lastly, all randomized algorithms in this paper are Monte Carlo algorithms with one-sided error. On \No-instances they always return \No, and on \Yes-instances they return \Yes (or output a certificate) with probability $>\frac{1}{2}$.

\section{Results}
    \noindent
    Our first main result gives exponential-time randomized algorithms for \phisub based on single-exponential multivariate algorithms for \phiext with parameter $k$.

    \begin{restatable}{theorem}{thmMainOne}
        \label{thm:main1}
        If there is an algorithm for \phiext with running time $b^{n - |X|} c^k  \bitsize^{O(1)}$ then there is a randomized algorithm for \phisub with running time $(1+b-\frac{1}{c})^{n}\bitsize^{O(1)}$.
    \end{restatable}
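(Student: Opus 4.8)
The idea is the standard random-sampling / random-subset reduction used by Fomin et al., adapted so that the sampling probability is tuned for a two-parameter subroutine. Given an instance $I$ with universe $U_I$ of size $n$, I would guess a hidden solution $S \in \cF_I$ as follows: sample a random subset $X \subseteq U_I$ by including each element independently with some probability $p$ (to be chosen), and then run the \phiext algorithm on $(I, X, k)$ for a suitable budget $k$. If $X \subseteq S$ and $|S \setminus X| \le k$, the subroutine reports \Yes (and a certificate $S' \supseteq X$ in $\cF_I$, which we output). Repeating this $N^{O(1)}$ times boosts the success probability above $1/2$ on \Yes-instances, while \No-instances never yield a certificate, so correctness is immediate from the one-sided-error convention. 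The entire content is the running-time analysis: the expected cost is the success probability's reciprocal times the subroutine's cost, and we must show this is $(1+b-\tfrac1c)^n N^{O(1)}$ for the right choice of $p$ and $k$.

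The key computation: fix a target solution $S$ with $|S| = s$. The probability that the sampled $X$ satisfies $X \subseteq S$ is $(1-p)^{n-s}$ (every non-$S$ element must be excluded), and conditioned on that, $|S \setminus X|$ is a sum of $s$ independent Bernoulli$(1-p)$ variables, so it concentrates around $(1-p)s$; taking $k = \lceil (1-p)s \rceil$ (or slightly above, absorbing lower-order terms) gives a constant — in fact $\Omega(1/\sqrt n)$ — probability that $|S\setminus X|\le k$ as well. Hence one round succeeds with probability $\Omega(n^{-1/2})(1-p)^{n-s}$. The subroutine runs in time $b^{n-|X|} c^k N^{O(1)}$; since $|X| \ge |X \cap S|$ in the good event and more simply $b^{n-|X|}\le b^n$ in the worst case, while $c^k = c^{(1-p)s + O(1)}$, the expected total time is $(1-p)^{-(n-s)} b^{n} c^{(1-p)s} N^{O(1)}$ — wait, one must be careful: $b^{n-|X|}$ with $|X|$ random has expectation governed by $(1 + (b-1)p)^{?}$, so the cleaner route is to bound the per-round time deterministically by $b^{n} c^{k}$ and the per-round success probability from below, then optimize. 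Actually the sharp analysis pairs the factor $b^{n-|X|}$ with the event $X\subseteq S$: on non-$S$ elements $X$ is empty (contributing $b^{n-s}$) and on $S$-elements each is in $X$ with probability $p$, contributing in expectation a factor $(p + (1-p)b)^{s}$ to $\mathbb{E}[b^{n-|X|}\mid X\subseteq S]\cdot(\text{something})$. Combining, the expected work is, up to $N^{O(1)}$,
\[
    \frac{b^{\,n-s}\,\bigl(p + (1-p)b\bigr)^{s}\, c^{\,(1-p)s}}{(1-p)^{\,n-s}} .
\]
Now choose $p$ so that $(1-p)^{-1} = b$, i.e.\ $p = 1 - 1/b$; then the $b^{n-s}$ and $(1-p)^{-(n-s)}$ cancel against each other except that $(1-p)^{-(n-s)} = b^{n-s}$ makes the ratio of the non-$S$ part equal to $b^{2(n-s)}$ — so instead I should pick $p$ to balance the $s$-dependence and the $(n-s)$-dependence simultaneously so that the whole expression is at most $(1+b-1/c)^n$ regardless of $s$. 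The right choice, matching the $b=1$ case of \cite{FominGLS16} where $p$ is chosen so that $(1-p)^{-1}+ \text{stuff} = 2 - 1/c$, is to take $p$ with $(1-p) = 1/(1+ (b-1) + \dots)$; I would set up the two bounds (one from $s$ small, one from $s$ large), observe the worst case is an interior $s$, solve the resulting single-variable condition, and verify the optimum value is exactly $1 + b - \frac1c$.

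The main obstacle, therefore, is purely the optimization bookkeeping: verifying that a single probability $p$ makes the displayed quantity $\le (1+b-\tfrac1c)^n \cdot N^{O(1)}$ \emph{uniformly over all $s \in \{0,\dots,n\}$}, since we do not know $s$ in advance. I expect this to reduce, after taking $n$-th roots and writing $\sigma = s/n$, to checking that $\max_{\sigma\in[0,1]} b^{1-\sigma}\,(p+(1-p)b)^{\sigma}\,c^{(1-p)\sigma}\,(1-p)^{-(1-\sigma)} \le 1+b-\tfrac1c$ for the chosen $p$; the left side is an exponential (hence monotone or unimodal) function of $\sigma$, so the max is at an endpoint or a single critical point, and one plugs in $p = 1 - \tfrac{1}{1+b-1/c}$ (the natural generalization of Fomin et al.'s choice) and checks the two endpoint inequalities $\sigma=0$ and $\sigma=1$ plus convexity. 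Everything else — the Chernoff/concentration estimate giving the $\Omega(n^{-1/2})$ factor, the probability-amplification by repetition, the one-sided-error correctness — is routine and follows the template of \cite{FominGLS16} verbatim with $b$ carried through.
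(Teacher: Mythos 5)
Your plan diverges from the paper's proof in a way that turns out to be fatal, not merely a different parametrization. The paper samples a set $Y$ of \emph{fixed} size $t$ uniformly at random from $\binom{U_I\setminus X}{t}$ and runs \phiext on $(I,X\cup Y,k'-t)$; the cost of one round is then a deterministic quantity $b^{\,n-|X|-t}c^{\,k'-t}N^{O(1)}$, and the whole argument reduces to the single clean inequality $\frac{\binom{n}{t}}{\binom{k}{t}}b^{\,n-t}c^{\,k-t}\le(1+b-\tfrac1c)^n n^{O(1)}$, which is Lemma~\ref{lem:orderComplexityEquality} and in turn rests on Lemma~\ref{thm:case1inequality} and a generating-function bound. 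Your proposal replaces this with Bernoulli($p$) sampling, which makes the per-round cost $b^{\,n-|X|}c^{k}$ a random variable, and this is where your analysis breaks.

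Concretely there are two gaps. First, your displayed ``expected work'' $\frac{b^{\,n-s}(p+(1-p)b)^s c^{(1-p)s}}{(1-p)^{\,n-s}}$ replaces $\mathbb{E}[b^{\,n-|X|}]$ by the conditional expectation $\mathbb{E}[b^{\,n-|X|}\mid X\subseteq S]$, but you pay for every round, not just the good ones: the correct numerator is the unconditional $\mathbb{E}[b^{\,n-|X|}]=(p+(1-p)b)^n$. Second, and more importantly, neither formula can be made to work with $p$ chosen independently of the target solution size. With your suggested $p=1-\tfrac{1}{1+b-1/c}$, your own expression already fails at $\sigma=0$ whenever $b>1$ (it gives $b(1+b-\tfrac1c)$), and even in the $b=1$ case one checks for $c=2$ that the $\sigma=1$ endpoint gives $c^{1-p}=2^{2/3}\approx 1.587>1.5=2-\tfrac1c$. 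Worse, even optimizing $p$ separately for each $s$ does not recover the bound: for $b=1,c=2,s=3n/4$ the best Bernoulli sampler yields approximately $1.542^n$, strictly above $1.5^n$. The fluctuation in $|X|$ (an AM--GM loss in $\mathbb{E}[b^{-|X|}]$, together with the weaker success probability $(1-p)^{n-s}$ relative to $\binom{s}{t}/\binom{n}{t}$) is not a lower-order effect; fixed-size sampling is what makes the tight constant $1+b-\tfrac1c$ achievable. You will need to switch to the fixed-size sample of Lemma~\ref{lem:extensionToExtension}, iterate over all budgets $k'\le k$, tie $t$ to $k'$, and invoke Lemmas~\ref{thm:case1inequality} and~\ref{lem:orderComplexityEquality}; the Chernoff step you sketch is then unnecessary.
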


    \noindent
    The next main result derandomizes the algorithm of Theorem \ref{thm:main1} at a cost of a subexponential factor in $n$ in the running time.

    \begin{restatable}{theorem}{thmMainTwo}
        \label{thm:main2}
        If there is an algorithm for \phiext with running time $b^{n - |X|} c^k  \bitsize^{O(1)}$ then there is an algorithm for \phisub with running time $(1+b-\frac{1}{c})^{n + o(n)}\bitsize^{O(1)}$.
    \end{restatable}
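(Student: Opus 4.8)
The plan is to derandomize the algorithm of Theorem~\ref{thm:main1} by replacing its single source of randomness---the uniform sampling of a subset of the universe---by an exhaustive sweep over a small, deterministically constructed family of candidate subsets. Concretely, the randomized algorithm of Theorem~\ref{thm:main1} may be taken to proceed as follows: it iterates over all candidate solution sizes $s\in\{0,1,\dots,n\}$; for each $s$ it fixes a split parameter $t=t(s)\le s$ (the one minimising the cost below); it samples a $t$-element subset $X\subseteq U_I$ uniformly at random; and it runs the assumed \phiext algorithm $\cA$ on $(I,X,s-t)$. If $I$ admits a solution $S^\star\in\cF_I$ with $|S^\star|=s$ and the sampled set satisfies $X\subseteq S^\star$, then $S:=S^\star\setminus X$ witnesses a \Yes answer of $\cA$, since $X\cup S=S^\star\in\cF_I$ and $|S|=s-t$. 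A uniform $t$-subset lies inside a fixed $s$-subset with probability $\binom{s}{t}/\binom{n}{t}$, so $\binom{n}{t}/\binom{s}{t}$ repetitions hit such an $X$ with constant probability; together with the $b^{\,n-t}c^{\,s-t}\bitsize^{O(1)}$ cost of one call to $\cA$, the optimal choice of $t$, and the worst choice of $s$, this is exactly what yields the bound $(1+b-\frac{1}{c})^{n}\bitsize^{O(1)}$ of Theorem~\ref{thm:main1}.

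The randomness is used only to guarantee, for every potential $s$-element solution $S^\star$, that some sampled set $X$ satisfies $X\subseteq S^\star$. Hence it suffices to replace the sampling by iterating over all members of a family $\cC_{n,s,t}$ of $t$-element subsets of $U_I$ with the covering property that every $s$-element subset of $U_I$ contains at least one member of $\cC_{n,s,t}$. Here I would invoke the deterministic construction of such families due to Fomin et al.~\cite{FominGLS16} (used there for $b=1$): for all $t\le s\le n$ it produces $\cC_{n,s,t}$ of size $\tfrac{\binom{n}{t}}{\binom{s}{t}}\cdot 2^{o(n)}$ in time $\tfrac{\binom{n}{t}}{\binom{s}{t}}\cdot 2^{o(n)}\bitsize^{O(1)}$. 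Since this combinatorial object depends only on $n,s,t$ and not on $b$ or $c$, it transfers to our setting verbatim.

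Correctness is then immediate. On a \No-instance every call to $\cA$ answers \No. On a \Yes-instance, fix any $S^\star\in\cF_I$ and set $s:=|S^\star|$; the outer loop reaches this $s$, the covering property supplies some $X\in\cC_{n,s,t}$ with $X\subseteq S^\star$, and the call of $\cA$ on $(I,X,s-t)$ returns \Yes (with a certificate if desired). For the running time, the work for a fixed pair $(s,t)$ is $|\cC_{n,s,t}|\cdot b^{\,n-t}c^{\,s-t}\bitsize^{O(1)}=\tfrac{\binom{n}{t}}{\binom{s}{t}}\,b^{\,n-t}c^{\,s-t}\cdot 2^{o(n)}\bitsize^{O(1)}$, i.e.\ the corresponding term in the analysis of Theorem~\ref{thm:main1} inflated by a factor $2^{o(n)}$. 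Summing over the $n+1$ guesses of $s$ (each with its optimal $t(s)$) therefore gives total running time $(1+b-\frac{1}{c})^{n}\cdot 2^{o(n)}\,\bitsize^{O(1)}=(1+b-\frac{1}{c})^{n+o(n)}\bitsize^{O(1)}$, the last step using that $1+b-\frac{1}{c}$ is a constant, strictly above $1$ outside the degenerate case $b=c=1$ in which \phiext, and hence \phisub, are already polynomial-time solvable.

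The main obstacle, and the only ingredient I would import as a black box, is the deterministic construction of the covering families $\cC_{n,s,t}$ with merely subexponential overhead in both their size and the time to build them; everything else is the bookkeeping of rerunning the proof of Theorem~\ref{thm:main1} with the extra $2^{o(n)}$ factor and checking that it is absorbed into the exponent. A minor point to dispatch along the way is that the trivial parameter ranges ($t=0$, $s=0$, or $s=n$, where $\cC_{n,s,t}$ is trivial or a single call of $\cA$ already suffices) cause no difficulty, and that iterating over the $n+1$ values of $s$ only contributes a polynomial overhead.
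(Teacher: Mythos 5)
Your proposal is correct and matches the paper's proof in all essentials: the paper likewise replaces the random $t$-subset by a deterministically constructed $(n,p,q)$-set-inclusion family of size $\kappa(n,p,q)\cdot 2^{o(n)}$ (their Theorem~\ref{thm:FGLSderandom}, imported from \cite{FominGLS16}), iterates over its members in place of the random draws, and observes that this multiplies the bound from Lemma~\ref{lem:orderComplexityEquality} by a $2^{o(n)}$ factor, which is absorbed into the exponent. The only cosmetic difference is that the paper routes through a deterministic \phiext subroutine (Lemma~\ref{lem:deterministicExtension}) before specialising to $X=\emptyset$, whereas you argue directly for \phisub; this is immaterial.
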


    \noindent
    We require the following notion of \textit{(b,c)-uniform} to describe our enumeration algorithms. Let $c,b \geq 1$ be real valued constants and $\Phi$ be an implicit set system. Then $\Phi$ is \textit{(b,c)}-uniform if for every instance $I$, set $X \subseteq U_I$, and integer $k \leq n - |X|$, the cardinality of the collection
    $
        \mathcal{F}_{I,X}^k = \{ S \subseteq U_I \bs X : |S| = k \text{ and } S \cup X \in \mathcal{F}_I \}
    $
    is at most $b^{n - |X|} c^k n^{O(1)}$. Then the following theorem provides new combinatorial bounds for collections generated by $(b,c)$-uniform implicit set systems.

    \begin{restatable}{theorem}{thmMainThree}
        \label{thm:main3}
        Let $c,b \geq 1$ and $\Phi$ be an implicit set system. If $\Phi$ is $(b,c)$-uniform then $|\mathcal{F}_I| \leq \left(  1 + b - \frac{1}{c} \right)^n n^{O(1)}$ for every instance $I$.
    \end{restatable}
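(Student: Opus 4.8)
The plan is to bound $|\mathcal{F}_I|$ by a single weighted double‑counting argument, choosing the weights so that one evaluation of the count collapses \emph{exactly} to $|\mathcal{F}_I|$ while the other can be estimated with $(b,c)$‑uniformity. In spirit this is the ``combinatorial'' counterpart of the sampling algorithm behind Theorem~\ref{thm:main1}: a weight $u^{-k}$ plays the role of the $c^k$ cost of the extension subroutine, and a weight $t^{-|X|}$ plays the role of ``guessing $X$''.

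\medskip
\noindent\textbf{Step 1: Set up the weights.} Fix reals $t,u>1$ with $\frac1t+\frac1u=1$ and $u>c$; concretely take $u=c+\frac1n$ and $\frac1t=1-\frac1u$. Observe that for $S\in\mathcal{F}_I$ and $X\subseteq S$ the set $Y:=S\setminus X$ lies in $\mathcal{F}_{I,X}^{\,|Y|}$, and conversely every pair $(X,Y)$ of disjoint subsets of $U_I$ with $X\cup Y\in\mathcal{F}_I$ arises this way from $S=X\cup Y$. So the pairs $(S,X)$ with $S\in\mathcal{F}_I$, $X\subseteq S$ are in bijection with the triples $(X,k,Y)$ with $X\subseteq U_I$, $0\le k\le n-|X|$, $Y\in\mathcal{F}_{I,X}^{k}$.

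\medskip
\noindent\textbf{Step 2: Evaluate the double sum from the $\mathcal{F}_I$ side.} Consider $\Sigma:=\sum_{S\in\mathcal{F}_I}\sum_{X\subseteq S}t^{-|X|}u^{-|S\setminus X|}$. Grouping the inner sum by $|X|$ and using the binomial theorem, $\sum_{X\subseteq S}t^{-|X|}u^{-|S\setminus X|}=(\tfrac1t+\tfrac1u)^{|S|}=1$, hence $\Sigma=|\mathcal{F}_I|$.

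\medskip
\noindent\textbf{Step 3: Evaluate the same sum from the extension side.} Re‑indexing $\Sigma$ by the bijection of Step~1 and applying $(b,c)$‑uniformity, $|\mathcal{F}_{I,X}^{k}|\le b^{\,n-|X|}c^{k}n^{O(1)}$ (valid precisely for $k\le n-|X|$),
\[
  \Sigma \;=\; \sum_{X\subseteq U_I}\ \sum_{k=0}^{n-|X|} |\mathcal{F}_{I,X}^{k}|\, t^{-|X|}u^{-k}
  \;\le\; n^{O(1)}\sum_{X\subseteq U_I} b^{\,n-|X|}t^{-|X|}\sum_{k\ge 0}(c/u)^{k}
  \;=\; n^{O(1)}\cdot\frac{u}{u-c}\cdot\Bigl(b+\tfrac1t\Bigr)^{n},
\]
where convergence of the geometric series uses $c/u<1$, and $\sum_{X}b^{\,n-|X|}t^{-|X|}=(b+\tfrac1t)^{n}$ is again the binomial theorem.

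\medskip
\noindent\textbf{Step 4: Combine and clean up.} Steps~2--3 give $|\mathcal{F}_I|\le n^{O(1)}\,\frac{u}{u-c}\bigl(b+\tfrac1t\bigr)^{n}$. With $u=c+\tfrac1n$ one computes $\frac{u}{u-c}=cn+1=O(n)$ and $b+\tfrac1t=1+b-\tfrac1u=1+b-\tfrac1c+\tfrac1{c(cn+1)}$, so $\bigl(b+\tfrac1t\bigr)^{n}\le\bigl(1+b-\tfrac1c\bigr)^{n}\exp\!\bigl(\tfrac{n}{c(cn+1)(1+b-1/c)}\bigr)=\bigl(1+b-\tfrac1c\bigr)^{n}\cdot O(1)$, using $1+b-\tfrac1c\ge 2-\tfrac1c>0$. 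Absorbing the polynomial factors yields $|\mathcal{F}_I|\le\bigl(1+b-\tfrac1c\bigr)^{n}n^{O(1)}$.

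\medskip
The only delicate point — and the thing worth spelling out — is the choice of the second parameter $u$: it must be strictly larger than $c$ so that the sum over $k$ converges, yet close enough to $c$ that $b+\tfrac1t=1+b-\tfrac1u$ differs from the target $1+b-\tfrac1c$ by only $O(1/n)$ additively; taking $u=c+\Theta(1/n)$ balances the resulting blow‑ups ($\frac{u}{u-c}$ grows, the additive error shrinks) so that both are swallowed by $n^{O(1)}$. Steps~2 and~3 are routine binomial‑theorem manipulations once the bijection of Step~1 is in place, and no case distinction on $|S|$ is needed because $(\tfrac1t+\tfrac1u)^{|S|}=1$ uniformly.
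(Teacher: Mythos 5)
Your proof is correct, and it takes a genuinely different route from the paper's. The paper proves Theorem~\ref{thm:main3} by the random process stated just before it: for each $k$ it samples a random $X$ of a carefully chosen integer size $t$, bounds $\Pr[W=Z]$ from below for every fixed $Z\in\cF_I$ of size $k$ via $(b,c)$-uniformity, and then invokes Lemma~\ref{lem:orderComplexityEquality} to choose $t$ so that $\frac{\binom{n}{t}}{\binom{k}{t}}b^{n-t}c^{k-t}$ collapses to $(1+b-\tfrac1c)^n n^{O(1)}$; disjointness of the events $E_Z$ then gives the bound on the number of sets of size $k$, and one implicitly sums over $k$. You instead run a deterministic weighted double-count over pairs $(S,X)$ with $X\subseteq S\in\cF_I$, with geometric weights $t^{-|X|}u^{-|S\setminus X|}$ chosen so that $\tfrac1t+\tfrac1u=1$: the inner binomial sum collapses to $1$ so $\Sigma=|\cF_I|$ exactly, while re-indexing over $(X,Y)$ and applying $(b,c)$-uniformity plus the binomial theorem gives $\Sigma\le n^{O(1)}\cdot\frac{u}{u-c}\,(b+\tfrac1t)^n$. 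The discrete optimization over $t$ in Lemma~\ref{lem:orderComplexityEquality} is thus replaced by a single choice of real parameter $u=c+\Theta(1/n)$, which you correctly verify trades an $O(n)$ blow-up from the geometric series against an $O(1)$ blow-up from the perturbed base. This avoids the paper's auxiliary lemma and the case split on $|S|$ entirely, yielding a cleaner, purely algebraic argument; the price is the slightly delicate bookkeeping in Step~4, which you carry out correctly (in particular $1+b-\tfrac1c\ge 1$ is what makes the exponential error $O(1)$).
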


    \noindent
    We say that an implicit set system is \textit{efficiently $(b,c)$-uniform} if there exists an algorithm that given $I, X$ and $k$ enumerates all elements of $\cF_{X,I}^k$ in time $b^{n - |X|}c^k \bitsize^{O(1)}$. In this case, we enumerate $\cF_I$ in the same time, up to a subexponential factor in $n$.

    \begin{restatable}{theorem}{thmMainFour}
        \label{thm:main4}
        Let $c,b \geq 1$ and $\Phi$ be an implicit set system. If $\Phi$ is efficiently $(b,c)$-uniform then there is an algorithm that given as input $I$ enumerates $\cF_I$ in time $\left( 1 + b - \frac{1}{c} \right)^{n+o(n)} \bitsize^{O(1)}$.
    \end{restatable}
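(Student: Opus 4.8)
The plan is to adapt the derandomized decision algorithm of Theorem~\ref{thm:main2} to enumeration by replacing its decision subroutine with the enumeration subroutine provided by efficient $(b,c)$-uniformity, and by running it once for every candidate solution size. The starting point is the monotone-local-search identity of \cite{FominGLS16}: if $S\in\cF_I$ has size $t$ and $X\subseteq S$ has size $i$, then $S\bs X\subseteq U_I\bs X$, $|S\bs X|=t-i\le n-i$, and $(S\bs X)\cup X=S\in\cF_I$, so $S\bs X\in\cF_{I,X}^{\,t-i}$. Consequently $S$ is recovered as $X\cup(S\bs X)$ as soon as the enumeration subroutine is invoked on $(I,X,t-i)$ for \emph{some} size-$i$ subset $X$ of $S$.

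Hence it suffices, for each $t\in\{0,\dots,n\}$, to compute a well-chosen value $i(t)\in\{0,\dots,t\}$ together with a family $\mathcal{C}(n,t,i(t))$ of size-$i(t)$ subsets of $U_I$ such that every size-$t$ subset of $U_I$ contains at least one member of $\mathcal{C}(n,t,i(t))$, with $|\mathcal{C}(n,t,i(t))|\le \left(\binom{n}{i(t)}/\binom{t}{i(t)}\right)\cdot 2^{o(n)}$, the family being constructible within that many steps up to polynomial factors. Such near-optimal covering families, and their efficient construction, are exactly the combinatorial object underlying the derandomization in Theorem~\ref{thm:main2} (as in \cite{FominGLS16}), so I would import them as a black box. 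The algorithm is then: for each $t$, for each $X\in\mathcal{C}(n,t,i(t))$, run the enumeration of $\cF_{I,X}^{\,t-i(t)}$, and output $X\cup S'$ for every enumerated $S'$; finally, sort the output to remove repetitions (a given $S$ may be produced by several of the $X$'s it contains). Correctness is immediate from the previous paragraph: every $S\in\cF_I$ with $|S|=t$ contains some $X\in\mathcal{C}(n,t,i(t))$ and is therefore output.

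For the running time, each inner call enumerates $\cF_{I,X}^{\,t-i(t)}$ with $|X|=i(t)$ in time $b^{\,n-i(t)}c^{\,t-i(t)}\bitsize^{O(1)}$, so the whole algorithm runs in time
\[
 \sum_{t=0}^{n}\Big(|\mathcal{C}(n,t,i(t))|+1\Big)\cdot b^{\,n-i(t)}c^{\,t-i(t)}\bitsize^{O(1)}
 \;\le\; \sum_{t=0}^{n}\frac{\binom{n}{i(t)}}{\binom{t}{i(t)}}\, b^{\,n-i(t)}c^{\,t-i(t)}\cdot 2^{o(n)}\bitsize^{O(1)}.
\]
Taking $i(t)$ to be a minimizer, the summand equals $2^{o(n)}\bitsize^{O(1)}\cdot\min_{0\le i\le t}\left(\binom{n}{i}/\binom{t}{i}\right) b^{\,n-i}c^{\,t-i}$, and this minimum is precisely the quantity shown to be at most $(1+b-\tfrac1c)^n\cdot 2^{o(n)}$ in the proof of Theorem~\ref{thm:main1}. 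Since there are only $n+1$ values of $t$, the total is $(1+b-\tfrac1c)^{n+o(n)}\bitsize^{O(1)}$; the final deduplicating sort stays within this bound because the (multiset) output size does, and the result is consistent with the combinatorial bound of Theorem~\ref{thm:main3}.

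The only genuinely non-routine ingredient is the per-size optimization $\min_{0\le i\le t}\left(\binom{n}{i}/\binom{t}{i}\right) b^{\,n-i}c^{\,t-i}\le(1+b-\tfrac1c)^n 2^{o(n)}$ — the non-convex optimization flagged in the introduction — but that is already established in the proof of Theorem~\ref{thm:main1}, so here it is reused rather than reproved. The remaining points are bookkeeping: checking that the covering-family construction of Theorem~\ref{thm:main2} can be instantiated at the claimed size with a subexponential overhead that is uniform in $t$ and in $n$, that the minimizer satisfies $i(t)\le t$ so that both $\mathcal{C}(n,t,i(t))$ and the call to enumerate $\cF_{I,X}^{\,t-i(t)}$ are well defined, and the boundary behaviour (for $b=1$ one recovers the enumeration result of \cite{FominGLS16}; for $c=1$ one may simply take $i(t)=0$, i.e.\ $X=\emptyset$).
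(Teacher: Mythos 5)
Your proof is correct and takes essentially the same approach as the paper: it derandomizes the sampling step of Theorem~\ref{thm:main3} by a set-inclusion family from Theorem~\ref{thm:FGLSderandom}, loops over each $X$ in the family and enumerates $\cF_{I,X}^{k-t}$ via efficient $(b,c)$-uniformity, reuses the optimization of Lemma~\ref{lem:orderComplexityEquality} to bound the total, and deduplicates the output (the paper uses a trie where you use a sort, a cosmetic difference).
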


\section{Random Sampling and Multivariate Subroutines}
    In this section, we prove Theorem \ref{thm:main1}.
    To do this, we first need the following lemmas.

    \begin{lemma}
        \label{thm:case1inequality}
        If $b, c \geq 1$ then

        \[ b \cdot  c^{\frac{1}{bc}} \leq 1 + b - \frac{1}{c} \]
    \end{lemma}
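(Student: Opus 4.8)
The plan is to reduce the claimed inequality to a standard one-variable concavity estimate (equivalently, to weighted AM--GM or Bernoulli's inequality). Since $b \ge 1 > 0$, I would first divide both sides by $b$. The right-hand side becomes $\tfrac{1}{b}+1-\tfrac{1}{bc} = 1 + \tfrac{c-1}{bc}$ and the left-hand side becomes $c^{1/(bc)}$, so it suffices to prove
\[
    c^{1/(bc)} \;\le\; 1 + \frac{c-1}{bc}.
\]

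Next I would introduce $t := \tfrac{1}{bc}$. Because $b \ge 1$ and $c \ge 1$ we have $bc \ge 1$, hence $0 < t \le 1$; moreover $\tfrac{c-1}{bc} = t(c-1)$. The inequality to prove therefore rewrites as $c^{t} \le 1 + t(c-1)$. This is precisely the statement that the function $x \mapsto x^{t}$, which is concave on $(0,\infty)$ since its second derivative $t(t-1)x^{t-2}$ is nonpositive for $0 \le t \le 1$, lies weakly below its tangent line at $x = 1$, namely $y = 1 + t(x-1)$. Evaluating at $x = c$ (with $c \ge 1 > 0$) gives the claim. One could alternatively cite Bernoulli's inequality, or apply weighted AM--GM with weights $t$ and $1-t$ to the values $c$ and $1$. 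It is also worth recording that equality holds exactly when $c = 1$, matching the boundary behaviour of the original inequality (where both sides equal $b$).

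I do not expect a real obstacle here; the one point requiring care is verifying that the exponent $t = 1/(bc)$ genuinely lies in $(0,1]$ for all $b,c \ge 1$, since this is exactly the range in which $x \mapsto x^{t}$ is concave and the tangent-line bound is valid. Everything else is routine algebraic rearrangement.
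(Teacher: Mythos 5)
Your proof is correct, and it takes a genuinely different and cleaner route than the paper's. The paper takes logarithms and works directly with the two-variable function $y(b,c) = \log(1+b-1/c) - \log b - \tfrac{1}{bc}\log c$, showing it is nonnegative by a two-layer calculus argument: it computes $\partial y/\partial c$, isolates its numerator $z = (bc+c-1)\log c - c + 1$, and then shows $z \geq 0$ by computing $\partial z/\partial c$ and observing it is nonnegative. You instead normalize by $b$ and substitute $t = \tfrac{1}{bc} \in (0,1]$, which collapses the two-variable inequality to the single-variable statement $c^t \leq 1 + t(c-1)$ --- exactly Bernoulli's inequality, or equivalently the tangent-line bound for the concave function $x \mapsto x^t$ at $x=1$. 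Your reduction exposes why the inequality is true and where the hypotheses $b,c \geq 1$ enter (they guarantee $t \leq 1$, the concave range); the paper's argument obscures this behind two rounds of differentiation. Both proofs correctly identify $c=1$ as the equality boundary. Your version is shorter and conceptually sharper, at the mild cost of invoking Bernoulli's inequality (or concavity) as a known fact rather than proving everything from scratch with derivatives; either is acceptable, and if anything yours is the proof one would prefer to read.
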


    \begin{proof}
        As both sides of the inequality are positive, it suffices to show that
        $ \log (b   c^{\frac{1}{bc}}) \leq \log(1 + b - 1/c) $.
        So we let $y = \log(1 + b - 1/c) - \log b - \frac{1}{bc} \log c$ and prove that $y \geq 0$ for all $b,c \geq 1$.
        When $c = 1$ we have that $y = 0$ for all $b$.
        We will show that for any fixed $b \geq 1$ we have that $y \geq 0$ by showing that $y$ increases with $c\ge 1$.
        For fixed $b$, the partial derivative with respect to $c$ is
        $ \frac{\partial y}{\partial c} = \frac{(bc + c - 1) \log c - c + 1}{bc^2(bc + c - 1)}.$
        When $c = 1$ then for all $b$, $\frac{\partial y}{\partial c} = 0$.
        As the denominator is positive for $b,c \geq 1$ it is sufficient to show that the numerator $z = (bc + c - 1) \log c - c + 1$ is non-negative.
        To show that $z \geq 0$, we consider the partial derivative again with respect to $c$:
        $\frac{\partial z}{\partial c} = (b + 1) \log c + b - \frac{1}{c}$
        For $b,c\ge 1$, we have that $b - \frac{1}{c} \geq 0$ and $(b+1)\log(c) \geq 0$.
        Since $\frac{\partial z}{\partial c} \geq 0$, we conclude that $z$ is increasing and non-negative which implies $y$ is also increasing and non-negative, for all $b, c \geq 1$. This proves the lemma.
    \end{proof}

    The proof of the next lemma follows the proof of Lemma 2.2 from  \cite{FominGLS16}, who proved it for $b=1$.

    \begin{lemma}
        \label{lem:orderComplexityEquality}
        Let $b, c \geq 1$, $n$ and $k \leq n$ be non-negative integers. Then, there exists $t \geq 0$ such that
        \[
            \frac{\binom{n}{t}}{\binom{k}{t}}   b^{n-t}   c^{k-t} = \left( 1 + b - \frac{1}{c} \right)^n n^{O(1)}
        \]
    \end{lemma}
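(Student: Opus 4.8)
The plan is to show that there is an integer $t\ge 0$ for which $\frac{\binom{n}{t}}{\binom{k}{t}}b^{n-t}c^{k-t}\le\bigl(1+b-\tfrac1c\bigr)^n n^{O(1)}$, which is the bound actually needed (sampling a size-$t$ subset $X$ of the universe $\binom nt/\binom kt$ times and invoking the \phiext algorithm with parameter $k-t$ each time costs exactly this product). Two preliminary observations shape the choice of $t$. First, $\binom kt=0$ whenever $t>k$, so only $t\in\{0,1,\dots,k\}$ is usable, and it is convenient to work with $j:=k-t$ and $M:=n-k\ge 0$. Second, the elementary identity $\binom nt\binom{n-t}{k-t}=\binom nk\binom kt$ gives $\frac{\binom nt}{\binom kt}=\frac{\binom nk}{\binom{n-t}{k-t}}$, so after substituting and using $b^{n-t}c^{k-t}=b^M(bc)^j$ the quantity to bound becomes
\[
   \binom nk\, b^{M}\cdot\frac{(bc)^{j}}{\binom{M+j}{j}},\qquad j\in\{0,1,\dots,k\}.
\]
Thus I want to pick $j$ making $\varphi(j):=(bc)^{j}/\binom{M+j}{j}$ small.

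I would then split according to the size of $k$. If $bc=1$ then $b=c=1$, the target is $1^n$, and $j=k$ (i.e.\ $t=0$) already gives value $1$; so assume $bc>1$. The ratio $\varphi(j+1)/\varphi(j)=bc\cdot\frac{j+1}{M+j+1}$ is below $1$ exactly for $j$ smaller than $j^\star:=\frac{M}{bc-1}$ up to an $O(1)$ additive error, so $\varphi$ is minimised at an integer within $O(1)$ of $j^\star$, and $j^\star\le k$ holds precisely when $k\ge n/(bc)$. In this first case I take $j$ to be such a nearby integer, estimate the denominator from below by the standard bound $\binom NK\ge\frac1{N+1}\cdot\frac{N^N}{K^K(N-K)^{N-K}}$ (which follows in one line from the fact that the modal term of $\mathrm{Bin}(N,K/N)$ has probability at least $\frac1{N+1}$), and push through the exponent bookkeeping — with $M+j^\star=\frac{bc}{bc-1}M$ and $j^\star/(M+j^\star)=\frac1{bc}$ everything collapses — to obtain $\varphi(j^\star)\le n^{O(1)}\bigl(1-\tfrac1{bc}\bigr)^{M}$. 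Consequently the displayed quantity is at most $n^{O(1)}\binom nk b^{M}\bigl(1-\tfrac1{bc}\bigr)^{M}=n^{O(1)}\binom nk\bigl(b-\tfrac1c\bigr)^{M}$; and since $b-\tfrac1c\ge 0$, the term $i=M$ of the all‑nonnegative binomial expansion $\bigl(1+(b-\tfrac1c)\bigr)^n=\sum_{i=0}^n\binom ni\bigl(b-\tfrac1c\bigr)^i$ already dominates $\binom nk\bigl(b-\tfrac1c\bigr)^{M}$, giving the desired $\le n^{O(1)}\bigl(1+b-\tfrac1c\bigr)^n$.

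In the complementary case $k<n/(bc)$ I simply take $j=k$, i.e.\ $t=0$, where the displayed quantity equals $b^{n}c^{k}$; since $c\ge 1$ and $k/n<1/(bc)$ we have $c^{k/n}\le c^{1/(bc)}$, hence $b^{n}c^{k}\le\bigl(b\,c^{1/(bc)}\bigr)^{n}$, which is at most $\bigl(1+b-\tfrac1c\bigr)^n$ by Lemma~\ref{thm:case1inequality}. Either way a suitable $t$ has been produced.

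The main obstacle is the exponent bookkeeping in the first case: verifying $\varphi(j^\star)\le n^{O(1)}\bigl(1-\tfrac1{bc}\bigr)^M$, and checking that replacing the real number $j^\star$ by a nearby \emph{integer} costs only a polynomial factor so that the estimate survives. Everything else is routine — the binomial identity, the elementary lower bound on $\binom NK$, the single‑term consequence of the binomial theorem, and, crucially, Lemma~\ref{thm:case1inequality}, which is exactly tailored to the small‑$k$ regime $k<n/(bc)$ where random sampling cannot help and the optimal choice is $t=0$.
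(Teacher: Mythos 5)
Your proof follows essentially the same route as the paper's: the same binomial identity to rewrite $\binom{n}{t}/\binom{k}{t}$ as $\binom{n}{k}/\binom{n-t}{k-t}$, the same case split around $k \gtrless n/(bc)$, the same invocation of Lemma~\ref{thm:case1inequality} with $t=0$ in the small-$k$ regime, the same optimal choice $t = (bck-n)/(bc-1)$, and the same single-term application of the binomial theorem at the end. The only technical difference is that the paper lower-bounds the denominator $\binom{M+j}{j}(1/(bc))^{j}$ via the generating-function identity $\sum_{i}\binom{m+i}{i}x^{i}=(1-x)^{-(m+1)}$ together with an informal max-term argument, whereas you apply a Stirling-style lower bound to $\binom{M+j}{j}$ directly at $j^\star = M/(bc-1)$ — both land on the same $n^{O(1)}\bigl(1-\frac{1}{bc}\bigr)^{M}$ estimate, and your variant is arguably more explicit.
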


    \begin{proof}
        We consider two cases. First suppose $k \leq \frac{n}{bc}$. Then for $t = 0$ the LHS (left-hand side) is at most $b^n c^k \leq b^n c^{\frac{n}{bc}} \leq \left( 1 + b - 1/c \right)^n$ by Lemma \ref{thm:case1inequality}. 
        Now if $k > \frac{n}{bc}$ then we rewrite the LHS as
        \[ 
            \frac{\binom{n}{t}}{\binom{k}{t}}   b^{n-t}   c^{k-t} 
            = \frac{\binom{n}{k}b^{n-k}}{\binom{n-t}{k-t} \left( \frac{1}{bc} \right)^{k-t}}
        \]
        Let us lower bound the denominator. For any $x \geq 0$ and an integer $m \geq 0$,
        \begin{equation}
            \label{eq:generatingSum}
            \sum_{i \geq 0} \binom{m + i}{i} x^i = \sum_{i \geq 0} \binom{m + i}{m} x^i = \frac{1}{(1-x)^{m+1}},
        \end{equation}
        by a known generating function.
        For $m = n - k$ and $x = \frac{1}{bc}$, the summand at $i = k - t$ equals the denominator $\binom{n-t}{k-t} \left( \frac{1}{bc} \right)^{k-t}$. Since $\frac{n}{k} < bc$ we have that $\frac{m+k}{k} < \frac{1}{x}$ and the terms of this sum decay exponentially for $i > k$. Thus, the maximum term $\frac{(m+i)  (m+i-1)  \ldots   (m+1)}{i  (i-1)   \ldots   1} x^i$ for this sum occurs for $i \leq k$, and its value is $\Omega \left( \left( \frac{1}{1-x} \right)^m \right)$ up to a lower order factor of $O(k)$. So by the binomial theorem the expression is at most
        \[ 
            \binom{n}{k}b^{n-k}(1-x)^{n-k} n^{O(1)} = \left( 1 + b - \frac{1}{c} \right)^n n^{O(1)} 
        \]
        Specifically, the maximum term for Equation \eqref{eq:generatingSum} occurs when $\frac{m + i}{i} = \frac{1}{x}$, that is when $\frac{n - t}{k - t} = cb$, and therefore, $t=\frac{cbk-n}{cb-1}$. 
    \end{proof}



    \begin{lemma}
        \label{lem:extensionToExtension}
        If there exist constants $b, c \geq 1$ and an algorithm for \phiext with running time $b^{n-|X|}  c^k   \bitsize^{O(1)}$ then there exists a randomized algorithm for \phiext with running time $\left( 1 + b - \frac{1}{c} \right)^{n - |X|}   \bitsize^{O(1)}$

    \end{lemma}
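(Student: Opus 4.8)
The plan is to shrink the budget $k$ by randomly guessing a large chunk of a hypothetical solution and absorbing it into $X$, with the guess size calibrated through Lemma~\ref{lem:orderComplexityEquality}. Fix an instance $I$, a set $X \subseteq U_I$, and an integer $k$; write $W := U_I \setminus X$ and $m := |W| = n - |X|$. Without loss of generality $0 \le k \le m$ (any feasible $S \subseteq W$ has $|S| \le m$, so we may replace $k$ by $\min(k,m)$ without changing the answer). Since $\cF_I$ need not contain sets of every size, I would first branch over the (at most $m+1$) possible solution sizes $s \in \{0,1,\dots,k\}$ and run, for each, a randomized subroutine $P_s$ that detects a solution of size exactly $s$.

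For fixed $s$, let $t = t_s$ be a non-negative integer for which Lemma~\ref{lem:orderComplexityEquality} holds with $m$ playing the role of $n$ and $s$ playing the role of $k$; the choice exhibited in the proof of that lemma ($t_s = 0$ when $s \le m/(bc)$, and $t_s$ the rounded value $\tfrac{cbs-m}{cb-1}$ otherwise) additionally satisfies $0 \le t_s \le s$, so that $\binom{s}{t_s} \ne 0$ and the residual budget $s - t_s$ is a legitimate parameter. The subroutine $P_s$ then performs $\lceil \binom{m}{t_s}/\binom{s}{t_s} \rceil$ independent rounds; in each round it samples a uniformly random $R \subseteq W$ with $|R| = t_s$ and invokes the assumed $b^{n-|X|}c^k\bitsize^{O(1)}$-time \phiext algorithm on the instance $(I, X \cup R, s - t_s)$, returning \Yes (with the obvious certificate) as soon as one such call does. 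If no call in any $P_s$ answers \Yes, the algorithm answers \No.

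Correctness and running time I would argue as follows. On a \No-instance every call is itself on a \No-instance — a witness $S'$ for $(I, X\cup R, s - t_s)$ would give the witness $S' \cup R \subseteq W$ of size at most $s \le k$ for the original instance — so the algorithm answers \No with certainty, which is the required one-sided error. On a \Yes-instance, pick any witness $S^*$ and put $s^* := |S^*| \le k$; in $P_{s^*}$ a sampled set $R$ of size $t_{s^*} \le s^*$ lies inside $S^*$ with probability $\binom{s^*}{t_{s^*}}/\binom{m}{t_{s^*}}$, and conditioned on that event the call on $(I, X \cup R, s^* - t_{s^*})$ answers \Yes with certificate $S^* \setminus R$; over $\lceil \binom{m}{t_{s^*}}/\binom{s^*}{t_{s^*}} \rceil$ rounds $P_{s^*}$ therefore succeeds with probability at least $1 - 1/e > 1/2$, and hence so does the whole algorithm. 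Each round of $P_s$ runs in $b^{m - t_s} c^{s - t_s}\bitsize^{O(1)}$ time since $|X \cup R| = |X| + t_s$, so $P_s$ runs in $\frac{\binom{m}{t_s}}{\binom{s}{t_s}} b^{m - t_s} c^{s - t_s}\bitsize^{O(1)} = (1 + b - \tfrac1c)^{m} m^{O(1)}\bitsize^{O(1)}$ by Lemma~\ref{lem:orderComplexityEquality}; summing over the $\le m+1$ values of $s$ and using $m \le \bitsize$ yields the claimed $(1 + b - \tfrac1c)^{n - |X|}\bitsize^{O(1)}$ bound.

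The one point that needs care — and the only real obstacle beyond invoking the earlier lemmas — is the interplay with Lemma~\ref{lem:orderComplexityEquality}: because $\cF_I$ may lack solutions of some sizes, one cannot simply plug the given $k$ into the lemma and must instead guess $s = |S^*|$, and one must verify that the guess size $t_s$ produced by the lemma satisfies $t_s \le s$ (so that $\binom{s}{t_s}$ is nonzero, the budget $s - t_s$ is valid, and a random $t_s$-subset of $W$ can lie inside a size-$s$ solution), with the integrality rounding of $t_s$ swept into the $n^{O(1)}$ factor. The two-regime case split inside Lemma~\ref{lem:orderComplexityEquality} ($s \le m/(bc)$ versus $s > m/(bc)$) is precisely what supplies a valid $t_s$ in every regime, so once that lemma is available the remainder is bookkeeping.
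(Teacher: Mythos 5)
Your proof is correct and follows essentially the same route as the paper's: guess the exact solution size $s\in\{0,\dots,k\}$, randomly sample a $t_s$-subset of $U_I\setminus X$ to absorb into $X$, invoke the assumed $b^{n-|X|}c^k N^{O(1)}$ algorithm with reduced budget $s-t_s$, repeat $\binom{m}{t_s}/\binom{s}{t_s}$ times, and calibrate $t_s$ via Lemma~\ref{lem:orderComplexityEquality}. The only differences are cosmetic — you replace $k$ by $\min(k,m)$ up front and spell out that $t_s\le s$ so that the residual budget and binomials are well-defined, whereas the paper leaves these points implicit; otherwise the decomposition, the success-probability argument, and the running-time bound match the paper's proof.
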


    \begin{proof}
        Our proof is similar to Lemma 2.1 in \cite{FominGLS16}. Let $\mathcal{B}$ be an algorithm for $\Phi$-\textsc{Extension} with running time $b^{n-|X|} c^k \bitsize^{O(1)}$. We now present a randomized algorithm $\mathcal{A}$, for the same problem for an input instance $(I,X,k')$ with $k' \leq k$.
        \begin{enumerate}
            \item Choose an integer $t \leq k'$ depending on $b, c, n, k'$ and $|X|$, the choice of which will be discussed later. Then select a random subset $Y$ of $U_I \backslash X$ of size $t$.
            \item Run Algorithm $\mathcal{B}$ on the instance $(I, X \cup Y, k' - t)$ and return the answer.
        \end{enumerate}
        Algorithm $\mathcal{A}$ has a running time upper bounded by $b^{n-|X|-t} c^{k'-t} \bitsize^{O(1)}$.
        Algorithm $\mathcal{A}$ returns yes for $(I,X,k')$ when $\mathcal{B}$ returns yes for $(I, X \cup Y, k' - t)$. In this case there exists a set $S \subseteq U_I \backslash (X \cup Y)$ of size at most $k'  - t \leq k - t$ such that $S \cup X \cup Y \in \mathcal{F}_I $. This, $Y \cup S$ witnesses that $(I, X, k)$ is indeed a yes-instance.

        Next we lower bound the probability that $\mathcal{A}$ returns yes if there exists a set $S \subseteq U_I \backslash X$ of size exactly $k'$ such that $X \cup S \in \mathcal{F}_I$. The algorithm $\mathcal{A}$ picks a set $Y$ of size $t$ at random from $U_I \backslash X$. There are $\binom{n - |X|}{t}$ possible choices for $Y$. If $\mathcal{A}$ picks one of the $\binom{k'}{t}$ subsets of $S$ as $Y$ then $\mathcal{A}$ returns yes. Thus, given that there exists a set $S \subseteq U_I \backslash X$ of size $k'$ such that $X \cup S \in \mathcal{F}_I$, we have that
        \[
            \Pr[ \mathcal{A} \text{ returns yes}] \geq \Pr[Y \subseteq S] = \binom{k'}{t} / \binom{n - |X|}{t}
        \]
        Let $p(k') =  \binom{k'}{t} / \binom{n - |X|}{t}$.
        For each $k' \in \{0 ,... , k\} $, our main algorithm runs $\mathcal{A}$ independently $\frac{1}{p(k')}$ times with parameter $k'$. The algorithm returns yes if any of the runs of $\mathcal{A}$ return yes. If $(I, X, k')$ is a yes-instance, then the main algorithm returns yes with probability at least 
        \[
            \min_{k' \leq k} \left\{ 1 - (1 - p(k'))^{\frac{1}{p(k')}} \right\} \geq 1 - \frac{1}{e} > \frac{1}{2}.
        \]
        Next we upper bound the running time of the main algorithm, which is
        
        \begin{align}
            \label{eq:kprimemaxinequality}
            \sum_{k' \leq k} \frac{1}{p(k')}   b^{n - |X| - t} c^{k' - t} \bitsize^{O(1)} &\leq
            \max_{k' \leq k} \frac{\binom{n - |X|}{t}}{\binom{k'}{t}}   b^{n - |X| - t} c^{k' - t} \bitsize^{O(1)} \\ &\leq
            \max_{k' \leq n - |X|} \frac{\binom{n - |X|}{t}}{\binom{k}{t}}   b^{n - |X| - t} c^{k - t} \bitsize^{O(1)}.
        \end{align}

        \noindent
        The choice of $t$ in algorithm $\mathcal{A}$ is chosen to minimize the value of $\frac{\binom{n - |X|}{t}}{\binom{k}{t}}   b^{n - |X| - t}   c^{k - t}$. For fixed $n$ and $|X|$ the running time of the algorithm is upper bounded by
        \begin{equation}
            \label{eq:runningTimeMinMax}
            \max_{0 \leq k \leq n - |X|} \left\{ 
                \min_{0 \leq t \leq k} \left\{
                    \frac{\binom{n - |X|}{t}}{\binom{k}{t}} b^{n - |X| - t} c^{k - t} \bitsize^{O(1)}
                \right\}
            \right\}
        \end{equation}
        By application of Lemma \ref{lem:orderComplexityEquality} we choose $t = \frac{cbk - (n - |X|)}{cb - 1}$ to obtain the upper bound 
        
        \[
            \left( 1 + b - \frac{1}{c} \right)^{n - |X|}   (n - |X|)^{O(1)},
        \]
        
        which, combined with $n < N$, completes the proof.
    \end{proof}

    \noindent
    Running algorithm $\mathcal{A}$ with $X = \emptyset$ and for each value of $k \in \{0,....,n\}$ results in an algorithm for  \phisub with running time $\left( 1 + b - \frac{1}{c} \right)^n   \bitsize^{O(1)}$, proving Theorem \ref{thm:main1}.

\section{Derandomization}
    In this section we prove Theorem \ref{thm:main2}, by derandomizing the algorithm in Theorem \ref{thm:main1}.

    \thmMainTwo*

    \noindent
    Given a set $U$ and an integer $q \leq |U|$ let $\binom{U}{q}$ represent the set of sets which contain $q$ elements of $U$.
    From \cite{FominGLS16} we define a pseudo-random object, the \textit{set-inclusion-family}, as well as an almost optimal sub-exponential construction of these objects.

    \begin{definition}
        Let $U$ be a universe of size $n$ and let $0 \leq q \leq p \leq n$. A family $\mathcal{C} \subseteq \binom{U}{q}$ is an $(n, p, q)$\text{-set-inclusion family}, if for every set $S \in \binom{U}{p}$, there is a set $Y \in \mathcal{C}$ such that $Y \subseteq S$.
    \end{definition}

    \noindent
    Let $\kappa(n,p,q) = \binom{n}{q} / \binom{p}{q}$. We also make use of the following theorem.

    \begin{theorem}[\cite{FominGLS16}]
        \label{thm:FGLSderandom}
        There is an algorithm that given $n, p$ and $q$ outputs an $(n,p,q)$-set-inclusion-family $\mathcal{C}$ of size at most $\kappa(n,p,q) \cdot 2^{o(n)}$ in time $\kappa (n,p,q) \cdot 2^{o(n)}$.
    \end{theorem}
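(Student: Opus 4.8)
Here is how I would go about proving Theorem~\ref{thm:FGLSderandom}.

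The plan is to first get the claimed \emph{size} by a slow deterministic construction, and then speed it up without hurting the size by cutting the universe into $\Theta(\sqrt n)$ small blocks and running the slow construction only inside each block. For the slow construction, run the greedy set-cover heuristic on the instance whose elements are the $p$-subsets of $U$ and whose sets are the $q$-subsets, a $q$-subset $C$ being deemed to cover exactly the $\binom{n-q}{p-q}$ sets $S\supseteq C$. By the standard LP-averaging analysis of greedy this returns a cover --- that is, an $(n,p,q)$-set-inclusion family --- of size at most $\kappa(n,p,q)\cdot(1+\ln\binom np)\le\kappa(n,p,q)\cdot(n+1)$, because the uniform fractional solution putting weight $1/\binom pq$ on every $q$-subset shows the fractional optimum is at most $\binom nq/\binom pq=\kappa(n,p,q)$. (The same size is obtained by a union bound over the $\binom np$ sets to be covered with a randomized algorithm.) This already has the right output size; the problem is that greedy runs in time $\binom nq\binom np\cdot n^{O(1)}$, which is not of the form $\kappa(n,p,q)\cdot2^{o(n)}$.

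First I would reduce to running this greedy construction only on tiny universes. Set $r:=\lceil\sqrt n\rceil$ and fix any partition $U=B_1\cup\dots\cup B_r$ with $|B_i|\le\lceil n/r\rceil\le\sqrt n+1$. Call a tuple $\vec s=(s_1,\dots,s_r)$ a \emph{profile} if $0\le s_i\le|B_i|$ for all $i$ and $\sum_i s_i=p$; there are at most $(\sqrt n+2)^{r}=2^{o(n)}$ profiles. For each profile I would: (i) compute, by a dynamic program over $i=1,\dots,r$, a split $\vec q=(q_1,\dots,q_r)$ with $0\le q_i\le s_i$ and $\sum_i q_i=q$ maximising $\prod_i\binom{s_i}{q_i}$; (ii) build, by the greedy construction above, an $(|B_i|,s_i,q_i)$-set-inclusion family $\cC_i\subseteq\binom{B_i}{q_i}$ of size at most $\kappa(|B_i|,s_i,q_i)\cdot(|B_i|+1)$ for each block; and (iii) add to the output family $\cC$ every union $C_1\cup\dots\cup C_r$ with $C_i\in\cC_i$. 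Correctness is immediate: for any $S\in\binom Up$ the tuple $(|S\cap B_i|)_i$ is an enumerated profile, for that profile each $\cC_i$ contains some $C_i\subseteq S\cap B_i$, and then $C_1\cup\dots\cup C_r$ is a size-$q$ member of $\cC$ contained in $S$.

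The crux will be the size and time analysis. Fixing a profile $\vec s$ and its split $\vec q$, two applications of Vandermonde's convolution give $\sum_{\vec q'}\prod_i\binom{|B_i|}{q_i'}=\binom nq$ and $\sum_{\vec q'}\prod_i\binom{s_i}{q_i'}=\binom pq$, both sums ranging over the at most $(n+1)^r$ tuples $\vec q'$ with $\sum_i q_i'=q$. Hence $\prod_i\binom{|B_i|}{q_i}\le\binom nq$, and since $\vec q$ maximises the second product, $\prod_i\binom{s_i}{q_i}\ge\binom pq/(n+1)^r$, so
\[
  \prod_i\kappa(|B_i|,s_i,q_i)=\frac{\prod_i\binom{|B_i|}{q_i}}{\prod_i\binom{s_i}{q_i}}\le\frac{\binom nq}{\binom pq/(n+1)^r}=\kappa(n,p,q)\cdot(n+1)^{r}.
\]
Multiplying by the per-block overhead $\prod_i(|B_i|+1)\le(\sqrt n+2)^{r}$ and by the number of profiles, and noting that all three of these factors are $2^{o(n)}$ and depend only on $n$, gives $|\cC|\le\kappa(n,p,q)\cdot2^{o(n)}$. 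For the running time: greedy on a block of size $\le\sqrt n+1$ costs $2^{O(\sqrt n)}=2^{o(n)}$ and is invoked $2^{o(n)}$ times, the dynamic programs cost $n^{O(1)}$ each, and emitting the products costs $O(r\cdot|\cC|)$; in total $\kappa(n,p,q)\cdot2^{o(n)}$.

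I expect the main obstacle to be exactly this size bound, and specifically not losing the factor $\binom pq$, which can itself be $2^{\Theta(n)}$: if the greedily built block families were replaced by all of $\binom{B_i}{q_i}$, the total size would still be $\prod_i\binom{|B_i|}{q_i}\le\binom nq=\kappa(n,p,q)\cdot\binom pq$, which is too large. So one genuinely needs near-optimal per-block families --- hence the greedy call inside each block rather than brute enumeration --- together with the \emph{right} split $\vec q(\vec s)$, and the two Vandermonde identities are precisely what certify that the product of the small per-block bounds telescopes back to $\kappa(n,p,q)$ up to a $2^{o(n)}$ factor. The remaining thing to get right is the block count $r$: small enough that the number of profiles $\prod_i(|B_i|+1)$ and the factor $(n+1)^r$ stay subexponential, yet large enough that each block has subexponential size so that greedy inside it is affordable; any $r$ with $r\log n=o(n)$ and $n/r=o(n)$ works, and $r=\lceil\sqrt n\rceil$ is a convenient choice.
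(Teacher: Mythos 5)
The paper never proves Theorem~\ref{thm:FGLSderandom}; it imports it verbatim from~\cite{FominGLS16}, so there is no ``paper's own proof'' to compare against. Your proof is correct, and it follows the same standard block-decomposition route used in~\cite{FominGLS16}: split $U$ into $\Theta(\sqrt n)$ blocks, enumerate profiles $\vec s$, pick a per-profile split $\vec q$, run greedy set cover inside each small block, and take cross-products.

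Two details you handled that are the genuine crux of this theorem and are easy to get wrong. First, the choice of $\vec q$ maximizing $\prod_i \binom{s_i}{q_i}$ is not cosmetic: with $(n+1)^r$ candidate splits summing to $\binom pq$ by Vandermonde, the maximizer is within a $(n+1)^r = 2^{O(\sqrt n \log n)}$ factor of $\binom pq$, which is exactly what lets $\prod_i \kappa(|B_i|,s_i,q_i)$ telescope back to $\kappa(n,p,q)\cdot 2^{o(n)}$; a naive per-block brute force $\binom{B_i}{q_i}$ would instead multiply out to $\binom nq = \kappa(n,p,q)\cdot\binom pq$, which is off by a possibly $2^{\Theta(n)}$ factor. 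Second, the per-block greedy (or, equivalently, the union-bound sampling argument you mention) is what buys the near-optimal $\kappa(|B_i|,s_i,q_i)\cdot n^{O(1)}$ size inside each block at only $2^{O(\sqrt n)}$ time, since the block universe is subexponential. The choice $r = \lceil\sqrt n\rceil$ balances these two requirements exactly as you observe, though any $r$ with $r\log n = o(n)$ and $n/r = o(n)$ works. Your accounting is tight and the correctness argument (profile match per $S$, then per-block inclusion) is airtight.
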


    We are now ready to prove Lemma \ref{lem:deterministicExtension}, by a very similar proof to Lemma \ref{lem:extensionToExtension}.

    \begin{lemma}
        \label{lem:deterministicExtension}
        If there exists constants $b, c \geq 1$ and an algorithm for \phiext with running time $b^{n-|X|} c^k  \bitsize^{O(1)}$ then there exists a deterministic algorithm for \phiext with running time $\left( 1 + b - \frac{1}{c} \right)^{n - |X|} \cdot 2^{o(n)} \cdot \bitsize^{O(1)}$.
    \end{lemma}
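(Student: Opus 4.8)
The plan is to follow the proof of Lemma~\ref{lem:extensionToExtension} almost verbatim, replacing the single random subset $Y$ by a deterministic sweep over a set-inclusion family produced by Theorem~\ref{thm:FGLSderandom}. Let $\cB$ be the assumed \phiext algorithm running in time $b^{n-|X|}c^k\bitsize^{O(1)}$, and let $(I,X,k)$ be the input. Any solution $S\subseteq U_I\setminus X$ with $|S|\le k$ and $S\cup X\in\cF_I$ has some size $k'\in\{0,\dots,k\}$, so it suffices to detect, for each fixed $k'$, a solution of size exactly $k'$.

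For each $k'\in\{0,\dots,k\}$ I would pick a threshold $t=t_{k'}\in\{0,\dots,k'\}$, invoke Theorem~\ref{thm:FGLSderandom} on $(n-|X|,\,k',\,t_{k'})$ to build an $(n-|X|,k',t_{k'})$-set-inclusion family $\cC_{k'}\subseteq\binom{U_I\setminus X}{t_{k'}}$ of size at most $\kappa(n-|X|,k',t_{k'})\cdot 2^{o(n)}$, and then run $\cB$ on $(I,\,X\cup Y,\,k'-t_{k'})$ for every $Y\in\cC_{k'}$; the algorithm answers \Yes iff one of these calls does. Correctness is immediate in both directions: a \Yes from $\cB(I,X\cup Y,k'-t_{k'})$ yields a witness $Y\cup S'$ of size at most $k'\le k$ for $(I,X,k)$; and conversely, if $S$ is a solution with $|S|=k'\ge t_{k'}$, the defining property of $\cC_{k'}$ gives $Y\in\cC_{k'}$ with $Y\subseteq S$, whence $\cB(I,X\cup Y,k'-t_{k'})$ succeeds on the witness $S\setminus Y$.

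For the running time, constructing $\cC_{k'}$ costs $\kappa(n-|X|,k',t_{k'})\cdot 2^{o(n)}$ and the ensuing calls cost $|\cC_{k'}|\cdot b^{n-|X|-t_{k'}}c^{k'-t_{k'}}\bitsize^{O(1)}$; summing over the at most $n+1$ choices of $k'$ only adds a polynomial factor, so the total is bounded by
\[
\max_{0\le k'\le n-|X|}\ \frac{\binom{n-|X|}{t_{k'}}}{\binom{k'}{t_{k'}}}\,b^{n-|X|-t_{k'}}\,c^{k'-t_{k'}}\cdot 2^{o(n)}\cdot\bitsize^{O(1)} .
\]
This is exactly the $\min_t\max_{k'}$ quantity of \eqref{eq:runningTimeMinMax}, up to the extra $2^{o(n)}$ factor, so choosing $t_{k'}=\frac{cbk'-(n-|X|)}{cb-1}$ (rounded to an integer) and applying Lemma~\ref{lem:orderComplexityEquality} gives the claimed bound $\left(1+b-\frac1c\right)^{n-|X|}\cdot 2^{o(n)}\cdot\bitsize^{O(1)}$.

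The one point needing care is the constraint $t_{k'}\le k'$ required by the definition of a set-inclusion family: in the regime $k'\le (n-|X|)/(bc)$ the optimal $t_{k'}$ suggested by Lemma~\ref{lem:orderComplexityEquality} is negative, and there one simply sets $t_{k'}=0$, so $\cC_{k'}=\{\emptyset\}$ and the call is just $\cB(I,X,k')$, which by the first case of Lemma~\ref{lem:orderComplexityEquality} already meets the target bound. Beyond this case split and the (polynomially many, hence harmless) $2^{o(n)}$ factors coming from both the size and the construction time of the families, the argument is a routine transcription of Lemma~\ref{lem:extensionToExtension}, and I do not anticipate any genuine obstacle.
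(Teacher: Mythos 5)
Your proposal is correct and follows essentially the same route as the paper's proof: replace the random sampling step of Lemma~\ref{lem:extensionToExtension} with a deterministic sweep over an $(n-|X|,k',t)$-set-inclusion family built via Theorem~\ref{thm:FGLSderandom}, then invoke Lemma~\ref{lem:orderComplexityEquality} to choose $t$. Your explicit handling of the small-$k'$ regime by setting $t_{k'}=0$ is a detail the paper leaves implicit, but otherwise the argument matches.
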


    \begin{proof}
    Let $\mathcal{B}$ be an algorithm for \phiext with running time $b^{n-|X|} c^k \bitsize^{O(1)}$. We can then adapt Algorithm $\mathcal{A}$ from the proof of Lemma \ref{lem:extensionToExtension}. Let $\mathcal{A'}$ be a new algorithm which has an input instance $(I, X, k')$ with $k' \leq k$. Choose $t = \frac{cbk' - (n - |X|)}{cb - 1}$.

    \begin{enumerate}
        \item Compute an $(n - |X|, k', t)$-set-inclusion-family $\mathcal{C}$ using the algorithm from Theorem \ref{thm:FGLSderandom} of size at most $\kappa(n-|X|, k', t) \cdot 2^{o(n)}$, in $\kappa(n-|X|, k', t) \cdot 2^{o(n)}$ time.
        \item For each set $Y$ in the set-inclusion-family $\mathcal{C}$ run algorithm $\mathcal{B}$ on the instance $(I, X \cup Y, k' - t)$ and return \textsc{Yes} of at least one returns \textsc{Yes} and \textsc{No} otherwise.
    \end{enumerate}
    \noindent
    The running time of $\mathcal{A'}$ is upper bounded by $\kappa(n - |X|, k', t) \cdot 2^{o(n)} \cdot b^{n-|X|-t} c^{k'-t} \bitsize^{O(1)}$, a term encountered in Equation \ref{eq:kprimemaxinequality} with a new subexponential factor in $n$,
    \[
        \max_{k' \leq k} \frac{\binom{n - |X|}{t}}{\binom{k'}{t}} \cdot b^{n-|X|-t} c^{k'-t} \bitsize^{O(1)} \cdot 2^{o(n)}.
    \]
    From here the proof follows that of Lemma \ref{lem:extensionToExtension}.
    \end{proof}
    The proof of Theorem \ref{thm:main2} follows by inclusion of the factor $2^{o(n)}$.

\section{Enumeration}
    We now proceed to prove Theorem \ref{thm:main3}, and \ref{thm:main4} on combinatorial upper bounds and enumeration algorithms.
    Consider the following random process.

    \begin{enumerate}
        \item Choose an integer $t$ based on $b, c, n$ and $k$, then randomly sample a subset $X$ of size $t$ from $U_I$.
        \item Uniformly at random pick a set $S$ from $\cF_{I, X}^{k-t}$, and output $W = X \cup S$. In the special case where $\cF_{I,X}^{k-t}$ is empty output the empty set.
    \end{enumerate}


    \thmMainThree*

    \begin{proof}
        Let $I$ be an instance, $k \leq n$. We will prove that the number of sets in $\cF_I$ of size exactly $k$ is upper bounded by $|\cF_I| \leq \left( 1 + b - \frac{1}{c} \right)^{n} n^{O(1)}$, where $k$ is chosen arbitrarily. We follow the random process described above, which picks a set $W$ of size $k$ from $\cF_I$.

        For each set $Z \in \cF_I$ of size exactly $k$, let $E_Z$ denote the event that the set $W$ output in step 2 is equal to $Z$. We then have the following lower bound on the probability of the event $E_Z$:

        \begin{align*} 
            \Pr[E_Z] &= \Pr[X \subseteq Z \land S = Z \backslash X] \\
                     &= \Pr[X \subseteq Z] \times \Pr[S = Z \backslash Z | X \subseteq Z] \\
                     &= \frac{\binom{k}{t}}{\binom{n}{t}} \cdot\frac{1}{\left|\cF_{I,X}^{k-t}\right|}
        \end{align*}
        
        Since $\Phi$ is $(b,c)$-uniform then $\left|\cF_{I,X}^{k-t}\right| \leq b^{n - |X|}c^k n^{O(1)}$ and $X$ is selected such that $|X| = t$, this results in the lower bound
        \[
            \Pr[E_Z] \geq \frac{\binom{k}{t}}{\binom{n}{t}} b^{-(n - t)}c^{-(k-t)} n^{-O(1)}.
        \]
        A choice of $t$ is made to minimize the lower bound, and this choice is given by Lemma \ref{lem:orderComplexityEquality} which states that for every $k \leq n$ there exists a $t \leq k$ such that we obtain a new lower bound
        \[
            \Pr[E_Z] \geq \left( 1 + b - \frac{1}{c} \right)^{-n} \cdot n^{O(1)}
        \]
        for every $Z \in \cF_I$ of size $k$. For every individual set $Z \in \cF_I$, the event $E_Z$ occurs disjointly, and we have that
        $\sum_{Z \in \cF_I, |Z| = k} \Pr[E_Z] \leq 1$.
        This fact with the lower bound of $\Pr[E_Z]$ implies an upper bound on the number of sets in $\cF_I$ of $(1 + b - \frac{1}{c})^n n^{O(1)}$, completing the proof.

    \end{proof}

    \thmMainFour*

    \begin{proof}
        We alter the random process used to prove Theorem \ref{thm:main3} to a deterministic one:
        \begin{enumerate}
            \item Construct a $(n, k, t)$-set inclusion family $\mathcal{C}$ using Theorem 6 from \cite{FominGLS16}. Loop over $X \in \mathcal{C}$.
            \item For each $X \in \mathcal{C}$, loop over all sets $S \in \cF_{I,X}^{k-t}$.
        \end{enumerate}
        Then we output $W = X \cup S$ from these two loops. Looping over $\mathcal{C}$ instead of random sampling for $X$ incurs a $2^{o(n)}$ overhead in the running time. As $\Phi$ is efficiently $(b,c)$-uniform, the inner loop requires $(1 + b - \frac{1}{c})^n N^{O(1)}$ time. In order to avoid enumerating duplicates, we save the sets which have been output in a trie and check first in linear time if a set has already been output.
        The product of the running times for these two nested loops results in the running time claimed by the theorem statement.
    \end{proof}

\section{Feedback Vertex Set}

    \label{study:fvs}

    First we describe the extension variant of \textsc{Feedback Vertex Set}

    \pbDef{\fvsext}{A graph $G = (V,E)$, vertex subset $X \subseteq V$ and an integer $k$}
    {Does there exist subset $S \subseteq V \backslash X$ such that $S \cup X$ is a FVS and $|S| \leq k$?}

    \noindent
    Instead of directly finding the feedback vertex set in a graph, we present algorithm $\mif(G,F,k)$ \cite{fomin2008minimum} which computes for a given graph $G$ and an acylic set $F$ the maximum size of an induced forest $F'$ containing $F$ with $|F'| \geq n - k$. This means that $G - F$ is a minimal feedback vertex set of size at most $k$. This algorithm can easily be turned into an algorithm computing at least one such set. 

    During the execution of $\mif$ one vertex $t \in F$ is called an \textit{active vertex}. Algorithm $\mif$ then branches on a chosen neighbor of $t$. Let $v \in N(t)$. Let $k$ be the set of all vertices of $F \backslash \{t\}$ that are adjacent to $v$ and parameter $k$ which represents a bound on the size of the feedback vertex set.

    As well as describing the algorithm we simultaneously perform the running time analysis which uses the Measure and Conquer framework and Lemma \ref{lem:11} at its core.

    \begin{lemma}{\textsc{\cite{gaspers2010exponential}}}
        \label{lem:11}
        Let $A$ be an algorithm for a problem $P$, $B$ be an algorithm for a class $C$ of instances of $P$, $c \geq 0$ and $r > 1$ be constants, and $\mu(\cdot), \mu_B(\cdot), \eta(\cdot)$ be measures for $P$, such that for any input instance $I$ from $C$, $\mu_B(\cdot) \leq \mu (I)$, and for any input instance $I$, $A$ either solves $P$ on $I \in C$ by invoking $B$ with running time $O(\eta(I)^{c+1}r^{\mu B(I)})$, or reduces $I$ to $k$ instances $I_1,...,I_k$, solves these recursively, and combines their solutions to solve $I$, using time $O(\eta(I)^c)$ for the reduction and combination steps (but not the recursive solves),
        \begin{align}
            (\forall i) \quad \eta(I_i) \leq \eta (I) - 1, \text{ \quad and \quad }
            \sum_{i=1}^k r^{\mu(I_i)} \leq r^{\mu(I)}.
        \end{align}
        Then $A$ solves any instance $I$ in time $O(\eta(I)^{c+1} r^{\mu(I)})$.
    \end{lemma}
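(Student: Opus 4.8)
The plan is to bound the running time $T(I)$ by analysing the (finite) recursion tree $\mathcal{T}_I$ that $A$ produces on input $I$: each internal node of $\mathcal{T}_I$ is an instance on which $A$ performs a reduction to children $I_1,\dots,I_k$, and each leaf is an instance (necessarily in $C$) on which $A$ calls $B$. I will bound separately the total cost at the leaves and the total cost at the internal nodes, showing each is $O(\eta(I)^{c+1} r^{\mu(I)})$.

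First, a straightforward structural induction on $\mathcal{T}_I$ upgrades the one-step hypothesis $\sum_{i=1}^{k} r^{\mu(I_i)} \le r^{\mu(I)}$ to the global inequality $\sum_{\ell} r^{\mu(I_\ell)} \le r^{\mu(I)}$, the sum ranging over all leaves $\ell$ of $\mathcal{T}_I$ with $I_\ell$ the instance at $\ell$. Since each summand $r^{\mu(I_i)}$ is positive, the one-step hypothesis also gives $\mu(I_i) \le \mu(I)$, so $\mu$ is non-increasing down $\mathcal{T}_I$; together with the hypothesis $\mu_B(\cdot)\le\mu(\cdot)$ on $C$ this yields $\mu_B(I_\ell) \le \mu(I_\ell) \le \mu(I)$ at every leaf. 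Each call to $B$ at a leaf $\ell$ then costs $O(\eta(I_\ell)^{c+1} r^{\mu_B(I_\ell)}) \le O(\eta(I)^{c+1} r^{\mu(I_\ell)})$ (using $\eta(I_\ell)\le\eta(I)$), and summing over the leaves and invoking the global inequality bounds the total leaf cost by $O(\eta(I)^{c+1} r^{\mu(I)})$.

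Second, for the internal nodes: along any root-to-leaf path of $\mathcal{T}_I$ the value of $\eta$ drops by at least $1$ per step while staying nonnegative, so $\mathcal{T}_I$ has depth at most $\eta(I)+1$. Since $\mu$ is nonnegative, $r^{\mu(I_\ell)} \ge 1$ for each leaf, so the global inequality also shows $\mathcal{T}_I$ has at most $r^{\mu(I)}$ leaves; as every internal node is an ancestor of some leaf and a leaf has at most $\eta(I)+1$ ancestors, $\mathcal{T}_I$ has at most $(\eta(I)+1)\, r^{\mu(I)}$ internal nodes. Each does $O(\eta(\cdot)^c) = O(\eta(I)^c)$ work for its reduction and combination step, for a total internal cost of $O(\eta(I)^{c+1} r^{\mu(I)})$. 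Adding the two contributions gives $T(I) = O(\eta(I)^{c+1} r^{\mu(I)})$.

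I do not anticipate a genuine obstacle here — this is the standard "Measure and Conquer with an auxiliary depth-like measure $\eta$" argument. The only points needing care are that the measures $\eta$ and $\mu$ should be taken nonnegative (this is exactly what validates the depth bound and the leaf-count bound), and the hidden-constant bookkeeping in the $O(\cdot)$'s; a fully formal version sidesteps the latter by fixing a constant $K$ and proving $T(I) \le K\,\eta(I)^{c+1} r^{\mu(I)}$ directly by induction, the one inequality to verify being that $\eta(I)^{c+1} - (\eta(I)-1)^{c+1} = \Theta(\eta(I)^c)$ leaves enough slack to absorb the additive $O(\eta(I)^c)$ reduction cost, with instances of bounded $\eta$ folded into $K$.
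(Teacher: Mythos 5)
Note that the paper itself does not prove Lemma~\ref{lem:11}: it is imported by citation from \cite{gaspers2010exponential} and used as a black box, so there is no in-paper proof to compare against. Your proof is correct and complete. The recursion-tree accounting you use --- upgrade the one-step constraint to the global inequality $\sum_\ell r^{\mu(I_\ell)} \le r^{\mu(I)}$ over leaves, bound total leaf cost via that inequality, bound the number of internal nodes by (number of leaves) $\times$ (depth), and cap the depth by $\eta(I)+1$ using the per-step decrement of $\eta$ --- is a valid alternative to the direct induction on the bound $T(I)\le K\,\eta(I)^{c+1} r^{\mu(I)}$, which is the route taken in the cited source and which you correctly sketch at the end (the slack from $\eta^{c+1}-(\eta-1)^{c+1}=\Theta(\eta^c)$ absorbing the additive $O(\eta^c)$ overhead is exactly the point). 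The two caveats you flag are precisely the right ones: nonnegativity of $\mu$ and $\eta$ is part of what ``measure'' means in the Measure-and-Conquer setting (and is needed both for $r^{\mu}\ge 1$ in the leaf count and for the depth bound, where what is really used is that $\eta$ is bounded below, with $\eta\ge 0$ being the standard sufficient condition); and the uniformity of the hidden $O(\cdot)$ constants holds because $B$ is a single fixed algorithm and the reduction/combination cost is a single $O(\eta(I)^c)$ guarantee. One micro-nit you may want to state explicitly: in bounding internal nodes, each internal node has at least one leaf descendant, so the map ``internal node $\mapsto$ (that node, some leaf descendant)'' is injective into pairs counted by (leaves) $\times$ (ancestors per leaf); as written your sentence is correct but terse, and spelling out the injection makes the double-counting transparent.
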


    \noindent
    \textit{Branching constraints} of the form  $\sum_{i = 1}^j 2^{-\delta_i} \leq 1$ are given as branching vectors $(\delta_1, ..., \delta_j)$.

    \subsubsection{Measure}

    To upper bound the exponential time complexity of the algorithm \texttt{mif} we use the measure
    \[
        \mu = |N(t) \backslash F|w_1 + |V \backslash (F \cup N(t))|w_2 + k \cdot w_k.
    \]
    In other words, each vertex in $F$ has weight 0, each vertex in $N(t)$ has weight $w_1$, each other vertex has weight $w_2$ and each unit of budget for the feedback vertex set has weight $w_k$, in the measure with an active vertex $t$.

    \subsubsection{Algorithm}

    The description of $\mif$ consists of a sequence of cases and subcases. The first case which applies is used, and inside a given case the hypotheses of all previous cases are assumed to be false. Preprocessing procedures come before main procedures. 

    \paragraph*{Preprocessing}

    \begin{enumerate}
        \item If $G$ consists of $j \geq 2$ connected components $G_1, G_2, ... ,G_j$, then the algorithm is called on each component. For $F_i = G_i \cap F$ for all $i \in \{1, 2, ..., j\}$ and $\sum_{i = 1}^j k_i \leq k $ then
        \[
            \mif(G, F, k) = \sum_{i=1}^{j} \mif(G_i, F_i, k_i)
        \]
            

        \item If $F$ is not independent, then apply operation $\text{Id}^*(T, v_T)$ on an arbitrary non-trivial component $T$ of $F$. If $T$ contains the active vertex then $v_T$ becomes active. Let $G'$ be the resulting graph and let $F'$ be the set of vertices of $G'$ obtained from $F$. Then
        \[
            \mif(G, F, k) = \mif(G', F', k) + |T| - 1
        \]

    \end{enumerate}

    \paragraph*{Main Procedures}

    \begin{enumerate}
        \item If $k < 0$ then

        \[
            \mif(G, F, k) = 0.
        \]

        \item If $F = \emptyset$ and $\Delta(G) \leq 1$ then $\cM_G(F) = \{V\}$ and

        \[
            \mif(G, F, k) = |V|.
        \]

        \item If $F = \emptyset$ and $\Delta(G) \geq 2$ then the algorithm chooses a vertex $t \in G$ of degree at least 2. Then $t$ is either contained in a maximum induced forest or not. The algorithm branches on two subproblems and returns the maximum:
        \[
            \mif(G, F, k) = \max \{ \mif(G, F \cup \{t\}, k), \mif(G \backslash \{t\}, F, k-1) \}.
        \]
        The first branch reduces the weight of $t$ to zero, as it is in $F$, and at least 2 neighbors have a reduced degree from $w_2 $ to $ w_1$. In the second branch we remove $t$ from the graph, meaning it will be in the feedback vertex set. We thus also gain a reduction of $w_k$ in the measure. Hence this rule induces the branching constraint
        \[
            (w_2 + 2(w_2 - w_1), w_2 + w_k).
        \]

        \item If $F$ contains no active vertex then choose an arbitrary vertex $t \in F$ as an active vertex. Denote the active vertex by $t$ from now on.

        \item If $V \backslash F = N(t)$ then the algorithm constructs the graph $H$ from Proposition \ref{prop:3} and computes a maximum independent set $I$ in $G$ of maximum size $n - k$. Then
        \[
            \mif(G, F, k) = |F| + |I|.
        \]

        \item If there is $v \in N(t)$ with $\text{gd}(v) \leq 1$ then add $v$ to $F$.
        \[
            \mif(G,F,k) = \mif(G, F \cup \{v\}, k).
        \]

        \item If there is $v \in N(t)$ with $\text{gd} (v) \geq 4$ then either add $v$ to $F$ or remove $v$ from $G$.
        \[
            \mif(G, F, k) = \max \{ \mif(G, F\cup\{v\}, k), \mif(G \backslash \{v\}, F, k-1) \}.
        \]
        The first case adds $v$ to $F$ reducing the measure by $w_1$, and a minimum of $4(w_2 - w_1)$ for each of the generalized neighbors. The other case removes $v$ this decreasing the measure by $w_k$. Hence this rule induces the branching constraint
        \[
            (w_1 + 4(w_2 - w_1), w_1 + w_k).
        \]

        \item \label{case:fvs2neg} If there is $v \in N(t)$ with $\text{gd}(v) = 2$ then denote its generalized neighbors by $u_1$ and $u_2$. Either add $v$ to $F$ or remove $v$ from $G$ but add $u_1$ and $u_2$ to $F$. If adding $u_1$ and $u_2$ to $F$ induces a cycle, we just ignore the last branch.
        \[
            \mif(G, F, k) = \max \{ \mif(G, F\cup\{v\}, k), \mif(G \backslash \{v\}, F \cup \{u_1, u_2\}, k-1) \}.
        \]
        Let $i \in \{0, 1, 2\}$ be the number of vertices adjacent to $v$ with weight $w_2$. The first case adds $v$ to $F$, and hence all $i$ $w_2$-weight neighbors of $v$ reduce to $w_1$, and the other $2-i$ vertices of weight $w_1$ induce a cycle, hence we remove them from $G$ and reduce the measure by $(2-i)w_k$. The second case removes $v$ and adds both $u_1$ and $u_2$ to $F$. This causes a reduction of $i w_2$ for the relevant vertices and $(2-i)w_1$ for the other vertices, and a single $w_k$ reduction due to the removal of $v$. This rule induces the branching constraint
        \[
            (w_1 + i (w_2 - w_1) + (2 - i)w_1 + (2-i)w_k, w_1 + iw_2 + (2-i)w_1 + w_k).
        \]

        \item If all vertices in $N(t)$ have exactly three generalized neighbors then at least one of these vertices must have a generalized neighbor outside $N(t)$, since the graph is connected and the condition of the case Main 6 does not hold. Denote such a vertex by $v$ and its generalized neighbors by $u_1$, $u_2$ and $u_3$ in such a way that $u_1 \not\in N(t)$. Then we either add $v$ to $F$; or remove $v$ from $G$ but add $u_1$ to $F$; or remove $v$ and $u_1$ from $G$ and add $u_2$ and $u_3$ to $F$. Similar to the previous case, if adding $u_2$ and $u_3$ to $F$ induces a cycle, we just ignore the last branch.
        \begin{align*}
            \mif(G,F) = \max \{ &\mif(G, F \cup \{v\}, k), 
                                 \mif(G \backslash \{v\}, F \cup \{u_1\}, k - 1), \\
                                &\mif(G \backslash \{v, u_1\}, F \cup \{u_2, u_3\}, k - 2)\}.
        \end{align*}
        Let $i \in \{1, 2, 3\}$ be the number of vertices adjacent to $v$ with weight $w_2$. The first and last cases are analogous to the analysis done in Main \ref{case:fvs2neg}. The second case removes $v$ from the forest hence adding it to the minimum feedback vertex set and reducing the measure by $w_1 + w_k$. A reduction of $w_2$ is gained by adding $u_1$ to $F$. Then this rule induces the branching constraint
        \begin{align*}
            (w_1 + i(w_2-w_1) + (3-i) w_1 + (3-i) w_k, 
             w_1 + w_2 + w_k, 
             w_1 + i w_2 + (3-i) w_1 + 2 w_k ).
        \end{align*}

    \end{enumerate}

    \subsubsection{Results}

    \begin{theorem}
        Let $G$ be a graph on n vertices. Then a minimal feedback vertex set in $G$ can be found in time \rtfvs.
    \end{theorem}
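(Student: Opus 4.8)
The plan is to exhibit an algorithm for \fvsext with running time $b^{\,n-|X|}c^{k}\bitsize^{O(1)}$ for suitable constants $b,c$, and to feed it into Theorem~\ref{thm:main2}. Take $\Phi$ to be the polynomial-time computable implicit set system sending a graph $G$ to its family of feedback vertex sets; then \phiext for $\Phi$ is exactly \fvsext, and (as in the proof of Theorem~\ref{thm:main1}, carried over to Theorem~\ref{thm:main2}) the \phisub algorithm amounts to running the extension routine with $X=\emptyset$ for each budget $k\in\{0,\dots,n\}$ and returning the witness of the smallest successful $k$, which is a \emph{minimum} --- hence minimal --- feedback vertex set of $G$. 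With the $b,c$ obtained below we will have $1+b-\tfrac1c<1.7117$, so the running time $\bigl(1+b-\tfrac1c\bigr)^{n+o(n)}\bitsize^{O(1)}$ is \rtfvs, the polynomial and $2^{o(n)}$ factors being subsumed (for graph inputs $\bitsize=n^{O(1)}$).

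\emph{Reducing \fvsext.} First I would note that for $S\subseteq V\setminus X$, the set $S\cup X$ is a feedback vertex set of $G$ if and only if $S$ is a feedback vertex set of $G-X$; so \fvsext on $(G,X,k)$ reduces in polynomial time to finding, if it exists, a feedback vertex set of size at most $k$ in the $(n-|X|)$-vertex graph $G-X$. This is what $\mif(G-X,\emptyset,k)$ provides: by the correctness argument of \cite{fomin2008minimum} (via Propositions~\ref{prop:1} and \ref{prop:3}), $\mif(G',\emptyset,k)$ returns the maximum size of an induced forest of $G'$ with at most $k$ vertices outside it (or reports that there is none), and the routine is easily augmented to output such a forest, whose complement is a minimum feedback vertex set of $G'$, necessarily of size $\le k$.

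\emph{Running time of \mif.} On an $m$-vertex input with $F=\emptyset$ the measure $\mu=|N(t)\setminus F|\,w_1+|V\setminus(F\cup N(t))|\,w_2+k\cdot w_k$ equals $m\,w_2+k\,w_k$, and it never increases during the execution; hence, by Lemma~\ref{lem:11} with $r=2$, a running-time bound of $2^{\mu}\bitsize^{O(1)}=(2^{w_2})^{m}(2^{w_k})^{k}\bitsize^{O(1)}$ --- the required form with $b=2^{w_2}$ and $c=2^{w_k}$ --- follows provided the weights $w_1,w_2,w_k$ satisfy: (i) every branching vector produced by rules Main~3, Main~7, Main~8 (for each $i\in\{0,1,2\}$) and Main~9 (for each $i\in\{1,2,3\}$) obeys $\sum_j 2^{-\delta_j}\le 1$; (ii) $0\le w_1\le w_2$ and $w_k\ge 0$, so the non-branching rules (the preprocessing rules and Main-procedure rules $4$ and $6$) never increase $\mu$; and (iii) $w_1\ge\log_2\lambda$, where $\lambda$ is the base of the maximum-independent-set algorithm used in Main~5, which (using $k\cdot w_k\ge 0$) makes that subroutine fit the measure budget. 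It then remains to choose $w_1,w_2,w_k$ meeting (i)--(iii) that minimise $1+2^{w_2}-2^{-w_k}$.

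\emph{The main obstacle.} This last optimisation is the crux and the principal difficulty. In a textbook Measure-and-Conquer analysis one minimises $2^{w_2}$, which is well behaved over the feasible region; here $1+2^{w_2}-2^{-w_k}$ is neither convex nor quasiconvex in $(w_2,w_k)$, so its minimiser cannot be certified by convex-programming duality --- one must instead search numerically along the non-convex trade-off curve between $w_2$ and $w_k$ cut out by the branching constraints, and then certify the chosen weights by substituting them into each of (i)--(iii). This produces weights with $2^{w_2}<1.5422$, $2^{w_k}<1.2041$ and $1+2^{w_2}-2^{-w_k}<1.7117$, hence an $O(1.5422^{m}\cdot 1.2041^{k})$-time algorithm for \mif, an $O(1.5422^{\,n-|X|}\cdot 1.2041^{k})\,\bitsize^{O(1)}$-time algorithm for \fvsext, and --- through Theorem~\ref{thm:main2} and the first paragraph --- the claimed \rtfvs-time algorithm that outputs a minimal feedback vertex set of $G$.
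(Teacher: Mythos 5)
Your proposal is correct and takes essentially the same approach as the paper: reduce \fvsext to running \mif on $G-X$ with $F=\emptyset$, analyze \mif via Lemma~\ref{lem:11} with the measure $\mu = |N(t)\setminus F|w_1 + |V\setminus(F\cup N(t))|w_2 + k\cdot w_k$ to obtain a $b^{n-|X|}c^{k}\bitsize^{O(1)}$ bound with $b=2^{w_2}$, $c=2^{w_k}$, and feed it into Theorem~\ref{thm:main2}. The paper simply exhibits the concrete weights $w_1 = 0.2775$, $w_2 = 0.6250$, $w_k = 0.2680$ (giving exactly your $2^{w_2}\approx 1.5422$, $2^{w_k}\approx 1.2041$, $1+2^{w_2}-2^{-w_k}\approx 1.7117$), while you describe the optimization at one remove, but the argument is the same.
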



    \begin{proof}
        Using the algorithm above along with the measure $\mu$, the following values of weights can be shown to satisfy all the branching vector constraints generated above.
        \[
            w_1 = 0.2775 \quad \quad \quad \quad
            w_2 = 0.6250 \quad \quad \quad \quad
            w_k = 0.2680 \quad \quad \quad \quad
        \]
        These weights result in an upper bound for the running time of \mif as $O(1.5422^n \cdot 1.2041^k)$ for computing a maximally induced forest of size a least $n - k$, and hence we have the running time for \fvsext of $O(1.5422^{n-|X|} \cdot 1.2041^k)$. By Theorem \ref{thm:main2} this results in a \rtfvs algorithm for computing a minimal feedback vertex set.
    \end{proof}

\section{Minimal Vertex Covers}

    \begin{restatable}{theorem}{thmMVCresult}
        \label{thm:mvc}
        Let $\sergeGamma$ be a constant with $0.169925 \approx 2 \log_2 3-3 \le \sergeGamma \le 1$.
        For every $n\ge k\ge 0$, and every graph $G$ on $n$ vertices, the number of minimal vertex covers of size at most $k$ of $G$ is at most
        $2^{\sergeBeta n + \sergeGamma k}$, where $\sergeBeta = (1-\sergeGamma)/2$.
    \end{restatable}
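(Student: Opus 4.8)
The plan is to pass to the complement and count maximal independent sets, then run a minimum-degree branching argument with the cardinality bound tracked through the auxiliary parameter $k$.

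First I would use the standard correspondence: a set $C\subseteq V$ is a minimal vertex cover of $G$ if and only if $V\setminus C$ is a maximal independent set of $G$, and $|C|\le k$ iff $|V\setminus C|\ge n-k$. So it suffices to prove that every $n$-vertex graph has at most $2^{\beta n+\gamma k}$ maximal independent sets of size at least $n-k$. I would prove this claim by induction on $n$, stated for all integers $k$ (for $k<0$ the count is $0$, so the bound is vacuous there). The base case $n=0$ is immediate, and for small graphs the only way the argument could break is by overcounting a single maximal independent set against the bound when $k=0$; the hypothesis $\beta\ge 0$, i.e. $\gamma\le 1$, is exactly what prevents this.

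For the inductive step I would pick a vertex $v$ of minimum degree $d$. Every maximal independent set $S$ intersects $N[v]$, so I would branch on a vertex $u\in S\cap N[v]$. For a fixed $u$, the map $S\mapsto S\setminus\{u\}$ is a bijection between the maximal independent sets of $G$ containing $u$ and those of $G-N[u]$, a graph on $n-1-d(u)$ vertices, under which the condition $|S|\ge n-k$ becomes $|S\setminus\{u\}|\ge (n-1-d(u))-(k-d(u))$. Applying the induction hypothesis to each $G-N[u]$ gives
\[
  \#\{S:\ |S|\ge n-k\}\ \le\ \sum_{u\in N[v]} 2^{\beta(n-1-d(u))+\gamma(k-d(u))}
  \ =\ 2^{\beta n+\gamma k}\sum_{u\in N[v]}2^{-\beta-(\beta+\gamma)d(u)}.
\]
Since $\beta+\gamma\ge 0$, every $u\in N[v]$ has $d(u)\ge d$, and $|N[v]|=d+1$, the sum is at most $(d+1)\,2^{-\beta-(\beta+\gamma)d}$. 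Hence the induction closes — with no polynomial overhead — provided the single inequality $\log_2(d+1)\le\beta+(\beta+\gamma)d$ holds for every integer $d\ge 0$.

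The crux, and the step I expect to be the main obstacle, is verifying this family of inequalities. Substituting $\beta=(1-\gamma)/2$ it becomes $h(d):=\tfrac12\big(1-\gamma+(1+\gamma)d\big)-\log_2(d+1)\ge 0$. For $d=0$ this is $\beta\ge 0$; for $d=1$ it is the equality $h(1)=0$, which is precisely why one takes $\beta=(1-\gamma)/2$ (degree-one vertices are the tight case for every admissible $\gamma$); for $d=2$ it reads $(3+\gamma)/2\ge\log_2 3$, i.e. $\gamma\ge 2\log_2 3-3$, which is exactly the stated lower bound on $\gamma$. Finally, for integer $d\ge 2$ I would invoke convexity of $h$ (an affine function minus the concave map $d\mapsto\log_2(d+1)$): a convex function lies on or above the line through $(1,h(1))$ and $(2,h(2))$ for arguments at least $2$, and since $h(1)=0\le h(2)$ this yields $h(d)\ge h(2)(d-1)\ge 0$. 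Thus all branching inequalities hold simultaneously and the induction goes through.
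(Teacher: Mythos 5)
Your proof is correct, and it takes a route that is closely related to but cleaner than the paper's. The paper argues directly on minimal vertex covers, doing an explicit case analysis on the minimum degree $d$ of a chosen vertex $v$: for $d\in\{0,1,2\}$ it branches on which vertex of $N[v]$ is left out of the cover (giving $1$, $2$, or $3$ branches), and for $d\ge 3$ it switches to a coarser two-way branch (keep $v$, or drop $v$ and force in all of $N(v)$), checking each resulting numerical inequality separately. You instead pass to the complementary formulation (maximal independent sets of size at least $n-k$), branch uniformly on which $u\in N[v]$ lies in the independent set via the standard bijection onto maximal independent sets of $G-N[u]$ --- a cover rather than a partition, which is all an upper bound needs --- and compress the entire analysis into the single inequality $\log_2(d+1)\le \beta+(\beta+\gamma)d$, handled at once by convexity. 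This makes visible at a glance that $d=1$ pins down $\beta=(1-\gamma)/2$, that $d=2$ forces $\gamma\ge 2\log_2 3-3$, and that all higher degrees come for free. The paper's two-way split at $d\ge 3$ is a genuinely sharper branching in general (one can check that the two analyses coincide exactly at $\gamma=2\log_2 3-3$, with the paper's branch winning for smaller $\gamma$), but over the stated range of $\gamma$ the two produce identical constraints, so your simplification costs nothing.
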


    \begin{proof}
        The proof is by induction on $n$.
        For the base case, a graph on at most one vertex has one minimal vertex cover -- the empty set -- and $2^{\sergeBeta n + \sergeGamma k} \ge 1$ since $\sergeBeta n + \sergeGamma k\ge 0$.
        
        Suppose the statement holds for graphs with fewer than $n$ vertices.
        We will repeatedly use the observation that for every vertex $v$, no minimal vertex cover of $G$ contains $N[v]$.
        Let $v$ be a vertex of minimum degree in $G$.
        
        If $v$ has degree $0$, then no minimal vertex cover contains $v$. Thus, $G$ has as many minimal vertex covers as $G-v$. The number of minimal vertex covers of $G$ is therefore upper bounded by
        \begin{align*}
         2^{\sergeBeta (n-1) + \sergeGamma k} \le 2^{\sergeBeta n + \sergeGamma k}.
        \end{align*}
        
        If $v$ has degree $1$, then every minimal vertex cover either excludes $v$ but includes its neighbor $u$, or it includes $v$ but excludes $u$.
        The number of minimal vertex covers of $G$ is therefore upper bounded by
        \begin{align*}
        2\cdot 2^{\sergeBeta (n-2) + \sergeGamma (k-1)}  \le 2^{\sergeBeta n + \sergeGamma k -(2\sergeBeta +\sergeGamma) +1} = 2^{\sergeBeta n + \sergeGamma k}
        \end{align*}
        since $2\sergeBeta +\sergeGamma = 1$.
        
        If $v$ has degree $2$, then every minimal vertex cover excludes a vertex among $N[v]$, but includes its neighbors who all have degree at least $2$.
        The number of minimal vertex covers of $G$ is therefore upper bounded by
        \begin{align*}
        3\cdot 2^{\sergeBeta (n-3) + \sergeGamma (k-2)}  \le 2^{\sergeBeta n + \sergeGamma k -(3\sergeBeta +2\sergeGamma) +\log_2 3} \le 2^{\sergeBeta n + \sergeGamma k}
        \end{align*}
        since $3\sergeBeta +2\sergeGamma = \frac{3+\sergeGamma}{2}\ge \log_2 3$.
        
        If $v$ has degree at least $3$, every minimal vertex cover includes $v$ or excludes $v$ but includes all its neighbors.
        The number of minimal vertex covers of $G$ is therefore upper bounded by
        \begin{align*}
        2^{\sergeBeta (n-1) + \sergeGamma (k-1)} + 2^{\sergeBeta (n-4) + \sergeGamma (k-3)}  &\le 2^{\sergeBeta n + \sergeGamma k} \cdot (2^{-\sergeBeta-\sergeGamma}+2^{-4\sergeBeta -3\sergeGamma})\\
        &= 2^{\sergeBeta n + \sergeGamma k} \cdot (2^{-\frac{1+\sergeGamma}{2}}+2^{-2-\sergeGamma}) \le 2^{\sergeBeta n + \sergeGamma k}
        \end{align*}
        since $2^{-\frac{1+\sergeGamma}{2}}+2^{-2-\sergeGamma} \le 0.89$.
    \end{proof}

    The upper bound of Theorem \ref{thm:mvc} is tight for every $\sergeGamma$ within the constraints of the theorem, as shown by $1$-regular graphs with $k=n/2$.
    For $\sergeGamma = 2 \log_2 3-3$, the disjoint union of triangles also matches the upper bound for $k=2n/3$.




    We note that the proof of Theorem \ref{thm:mvc} can easily be turned into an algorithm enumerating all minimal vertex covers of $G$ in time $2^{\sergeBeta n + \sergeGamma k} n^{O(1)}$. Alternatively, a polynomial-delay algorithm, such as the one by \cite{JohnsonP88}, could be used for the enumeration.

\section{Minimal Feedback Vertex Sets}

    In this section, we apply our framework to enumerating all minimal feedback vertex sets of an undirected graph on $n$ vertices.
    We will modify the algorithm from \cite{fomin2008minimum}, and conduct a multivariate branching analysis. 
    When combined with Theorem \ref{thm:main4} we obtain an algorithm for enumerating all minimal feedback vertex sets in time \rtenumfvs.

    \subsection{Measure}


    Following \cite{fomin2008minimum}, we show that for any acyclic subset $F$ of $G = (V,E)$, $|\cM_G(\emptyset)| \leq $ \rtenumfvs. 
    We assume $F$ is independent by Proposition \ref{prop:1}. For a graph $G$, an independent set $F$, and an \emph{active vertex} $t \in F$, we use the measure:
    \[
        \mu(G,F,t) = |A| \wA + |N(t) \bs (F \cup A)| \wB + |V \bs (F \cup N(t))| \wBeta + k \cdot \wK 
    \]
    where the set $A \subseteq N(t) \bs F$ consists of vertices which have generalized degree at least 3. We apply positive weights $\wA$, $\wB$ and $\wBeta$ to the three sets defined, with $0 \leq \wA \leq \wB \leq \wBeta$. A weight of $\wK$ is applied to the each vertex in the feedback vertex set.

    \subsection{Algorithm}

    Similar to \textsc{Feedback Vertex Set} in Subsection \ref{study:fvs}, we perform an algorithm description and a running time analysis using a Measure and Conquer framework simultaneously. 

    Let $f(G,F, k)= |M_G (F )|$ be the number of maximal induced forests containing $F$ of size at least $n - k$.
    Let $f (\mu, k)$ be a maximum $f(G,F, k)$ among all four-tuples $(G,F,t, k)$ of measure at most $\mu$.

    For the algorithm denote $t \in F$ to be the active vertex. If $F \neq \emptyset$ contains no active vertex then we choose an arbitrary vertex as active, reducing the measure.


    \subsubsection*{Cases}

    \begin{enumerate}
        \item If $k < 0$ then $f(G, F, k) = 0$.

        \item If $k = 0$ then $f(G, F, k) = 1$ if $G = F$ otherwise $f(G,F,k) = 0$. 




        \item \emph{$F = \emptyset$.} If $\Delta(G) \leq 1$ then $\cM_G(F) = \{V\}$ so $f(G,F,k) = 1$. Otherwise choose an active vertex $t \in V$ of degree at least 2. Every maximal forest either contains $t$ or does not, meaning that the number of maximal forests is
        \[
            f(G, \{t\}, k) + f(G \bs \{t\}, \emptyset, k - 1)
        \]
        which results in the branching vector
        \[
            (\wBeta + 2(\wBeta - \wB), \wBeta + \wK).
        \]
        From now on, denote $t \in F$ as the active vertex. Let $G_t = (V_t, E_t)$ be the connected component of $G$ which contains $t$. 
        
        \item \emph{$V_t \bs F = N(t)$.} By Proposition \ref{prop:3}, $f(\mu, k)$ is equal to the number of maximal independent sets in the graph $H$ of size at least $n - k$. By Theorem \ref{thm:mvc}, we have an upper bound on the number of minimal vertex covers of size at most $k$ giving us an upper bound also on the maximal independent sets of size at least $n - k$. We ensure that this computation is not worse than that of enumerating feedback vertex sets. 
        \[
            f(\mu, k) \leq 2^{\sergeBeta n + \sergeGamma k} \leq 2^\mu
        \]
        for $2 \log_2 3 - 3 \leq \sergeGamma \leq 1$ and $\sergeBeta = (1 - \sergeGamma)/2$.

        \item gd\emph{$(v) = 0$.} In this case every forest $X \in \cM_{G}(F)$ contains $v$ thus
        \[
            f(G, F, k) = f(G, F \cup \{c\}, k)
        \]
        which does not induce a branching vector.

        From this point on, pick a vertex $v \in N(t)$. If there is no such vertex then $t$ is no longer an active vertex and if $F \neq \emptyset$ then we choose an arbitrary vertex in $F$ as active.

        \item gd\emph{$(v) = 1$.} In this case every forest $X \in \cM_{G}(F)$ either contains $v$ or does not contain $v$ and contains its generalized neighbor $u$. This means that the number of maximal induced forests is at most
        \[
            f(G, F \cup \{v\}, k) + f(G \bs \{v\}, F \cup \{u\}, k-1).
        \]
        If we have that $u \in N(t)$, in the worst case we have the branching vector
        \[
            (\wB + (\wA + \wK), \wB + \wA + \wK)
        \]

        otherwise if $u \not \in N(t)$ we have the branching vector 
        \[
            (\wB + (\wBeta - \wB), \wB + \wBeta + \wK).
        \]

        \item gd\emph{$(v) = 2$.} Denote the generalized neighbors of $v$ by $u_1$ and $u_2$, and assume that $u_1 \not\in N(t)$. If $u_2 \in N(t)$ and $v$ belongs to a maximal induced forest $X$ then $u_2$ does not belong to $X$. Then every forest $X$ from $\cM_G(F)$ satisfies one of the following conditions:

        \begin{itemize}
            \item either $X$ contains $v$, but not $u_2$,
            \item or $X$ does not contain $v$, and contains $u_1$,
            \item or $X$ does not contain $v$ and $u_1$ but contains $u_2$.
        \end{itemize}

        So the number of maximal forests is at most
        \[
            f(G \bs \{u_2\}, F \cup \{v\}, k) + f(G \bs \{v\}, F \cup \{u_1\}, k-1) + f(G \bs \{v, u_1\}, F \cup \{u_2\}, k-2).
        \]

        In the worst case, where $u_2$ has weight $\wA$, then this results in the branching vector
        \[
            (\wB + (\wBeta - \wB) + \wA + \wK, \wB + \wBeta + \wK, \wB + \wBeta + \wA + 2\wK).
        \]

        However, if $u_1, u_2 \not \in N(t)$, assume gd$(u_1) \leq$ gd$(u_2)$. If not, swap $u_1$ and $u_2$. We consider new subcases and rules based on $d(u_1)$, the generalized degree of the vertex, and the structure of the local graph near the vertex $u_1$. If gd($u_1$) = 2, let $x_1$ and $x_2$ be the two generalized neighbors of $u_1$. 

        The weights of $v, u_1, u_2$ are $\wB, \wBeta, \wBeta$ respectively. We also note that when $v$ is selected, $u_1$ and $u_2$, if not removed from the graph or already considered in the branching analysis, will result in a reduction in measure of at least $(\wBeta - \wB)$ for each of $u_1$ and $u_2$. The branching analysis below has different weights for $x_1$ and $x_2$ depending on the subcase of the algorithm which is applied.

        \begin{enumerate}
            \item gd$(u_1) = 0$. Since every maximal forest $X \in \cM_G(F)$ will contain $u_1$, then $f(G, F, k) = f(G, F \cup \{u_1\}, k)$. This does not induce a branching vector.

            \item gd$(u_1) = 1$. Let the generalized neighbor of $u_1$ be $x$. Then every forest $X$ from $\cM_G(F)$ satisfies one of the following conditions:
            \begin{itemize}
                \item either $X$ contains $v$;
                \item or $X$ does not contain $v$ but contains $u_1$;
                \item or $X$ does not contain $v$ and $u_1$ but contains $u_2$ and $x$
            \end{itemize}
            which means the number of maximal induced forests is at most
            \[
                f(G, F \cup \{v\}, k) + f(G \bs \{v\}, F \cup \{u_1\}, k-1) + f(G \bs \{v, u_1\}, F \cup \{u_2, x\}, k - 2)
            \]
            which with worst case weights will induce the branching vector
            \[
                (\wB + 2(\wBeta - \wB), \wB + \wBeta + \wK, \wB + 2 \wBeta + 2 \wK + \wA).
            \]
            
            \item gd$(u_1) = 2$, $x_1 \in N(t)$ and $x_1$ is generalized neighbor of $u_1$ and $u_2$. If swapping $x_1$ and $x_2$ results in this case occurring, then do so.
            Then every forest $X$ from $\cM_G(F)$ satisfies one of the following conditions:
            \begin{itemize}
                \item either $X$ contains $v$ and $x_1$, but does not contain $u_1$ and $u_2$;
                \item or $X$ contains $v$ and not $x_1$;
                \item or$X$ does not contain $v$ and contains $u_1$;
                \item or$X$ does not contain $v$ and $u_1$ but contains $u_2$.
            \end{itemize}

            This means that the number of maximal induced forests is at most
            \begin{align*}
                f(G \bs \{u_1, u_2\}, F \cup \{v, x_1\}, k - 2) + &
                f(G \bs \{x_1\}, F \cup \{v\}, k - 1) + \\
                f(G \bs \{v\}, F \cup \{u_1\}, k- 1) + &
                f(G \bs \{v, u_1\}, F \cup \{u_2\}, k - 2).
            \end{align*}


            Since it is possible that $x_1$ has weight $\wA$ then in the worst case then this results in the branching vector
            \[
                (\wB + \wA + 2\wBeta + 2\wK, \wB + 2(\wBeta - \wB) + \wA + \wK, \wB + \wBeta + \wK, \wB + 2\wBeta + 2\wK).
            \]

            \item gd$(u_1) = 2$, $x_1 \in N(t)$ and $x_2 \in N(t)$. 
            hen every forest $X$ from $\cM_G(F)$ satisfies one of the following conditions:
            \begin{itemize}
                \item either $X$ contains $v$ and $u_1$, but does not contain $x_1$ and $x_2$;
                \item or $X$ contains $v$ and not $u_1$;
                \item or $X$ does not contain $v$ and contains $u_1$;
                \item or $X$ does not contain $v$ and $u_1$ but contains $u_2$.
            \end{itemize}

            This means that the number of maximal induced forests is at most
            \begin{align*}
                f(G \bs \{x_1, x_2\}, F \cup \{v, u_1\}, k - 2) + &
                f(G \bs \{u_1\}, F \cup \{v\}, k - 1) + \\
                f(G \bs \{v\}, F \cup \{u_1\}, k- 1) + &
                f(G \bs \{v, u_1\}, F \cup \{u_2\}, k - 2).
            \end{align*}
            
            In the worst case we have $x_1, x_2$ obtaining a weight of $\wA$, the branching vector is
            \[
                (\wB + \wBeta + 2\wA + 2\wK + (\wBeta - \wB), \wB + \wBeta + \wK + (\wBeta - \wB), \wB + \wBeta + \wK, \wB + 2\wBeta+ 2\wK).
            \]
            
            \item gd$(u_1) = 2$, and previous subcases do not apply. At least one of $x_1$ and $x_2$ has weight $\wBeta$ otherwise we would be in case (d). Let $x_1$ have weight $\wBeta$, and if not we can swap $x_1$ and $x_2$.
            Then every forest $X$ from $\cM_G(F)$ satisfies one of the following conditions:
            \begin{itemize}
                \item either $X$ contains $v$;
                \item or $X$ does not contain $v$ and contains $u_1$;
                \item or $X$ does not contain $v$ and $u_1$ but contains $u_2$ and $x_1$;
                \item or $X$ does not contain $v$, $u_1$ and $x_1$ but contains $u_2$ and $x_2$.
            \end{itemize}

            The number of maximal induced forests is thus at most
            \begin{align*}
                f(G, F \cup \{v\}, k) &+ f(G \bs \{v\}, F \cup \{u_1\}, k - 1) + \\
                f(G \bs \{v, u_1\}, F \cup \{u_2, x_1\}, k - 2) &+ f(G \bs \{v, u_1, x_1\}, F \cup \{u_2, x_2\}, k - 3).
            \end{align*}

            We now consider the weight of $x_2$, which is only ever selected into the forest.
            If $x_2$ is of weight $\wBeta$ the measure reduces by $\wBeta$.
            If $x_2$ is of weight $\wB$ since $x_2$ doesn't have both $u_1$ and $u_2$ as generalized neighbors due to case (c), in the worst case when $x_2$ is selected we also obtain a $(\wBeta - \wB)$ reduction.
            If $x_2$ is of weight $\wA$, we now have at least 2 unaccounted generalized neighbors which obtain a $2(\wBeta - \wB)$ reduction.

            We induce the following constraints to simplify the analysis so that in the worst case, we obtain a reduction of $\wBeta$ whenever $x_2$ is selected
            \[
                \wBeta \leq \wA + 2(\wBeta - \wB) \text{ and } \wBeta \leq \wB + (\wBeta - \wB)
            \]

            which results in the branching vector
            \[
                (\wB + 2(\wBeta - \wB), \wB + \wBeta + \wK, \wB + 3\wBeta + 2\wK, \wB + 4\wBeta + 3\wK).
            \]

            \item gd$(u_1) \geq 3$. This means that gd$(u_2) \geq 3$ as well. 
            hen every forest $X$ from $\cM_G(F)$ satisfies one of the following conditions:
            \begin{itemize}
                \item either $X$ contains $v$;
                \item or $X$ does not contain $v$ and contains $u_1$;
                \item or $X$ does not contain $v$ and $u_1$ but contains $u_2$.
            \end{itemize}
            This means the number of maximal induced forests is at most
            \[
                f(G, F \cup \{v\}, k) + f(G \bs \{v\}, F \cup \{u_1\}, k - 1) + f(G \bs \{v, u_1\}, F \cup \{u_2\}, k - 2).
            \]

            Since both $u_1$ and $u_2$ are generalized neighbors of $v$ with gd $\geq 3$ meaning they both obtain a weight of $\wA$ when $v$ is selected into the forest $X$. This establishes the branching vector
            \[
                (\wB + 2(\wBeta - \wA), \wB + \wBeta + \wK, \wB + 2\wBeta + 2\wK).
            \]

        \end{enumerate}

        \item gd\emph{$(v) = 3$.} Denote the generalized neighbors of $v$ by $u_1, u_2$ and $u_3$ according to the rule that $u_j \not \in N(t)$ and $u_k \in N(t)$ if and only if $j < k$. 

        Let $i$ be the number of generalized neighbors of $v$ that are not adjacent to $t$. For $i = 1,2$ we have that every forest $X$ from $\cM_G(F)$ satisfies one of the following conditions:
        \begin{itemize}
            \item either $X$ contains $v$;
            \item or $X$ does not contain $v$
        \end{itemize}
        meaning the number of maximal induced forests is at most
        \[
            f(G, F \cup \{v\}, k) + f(G \bs \{v\}, F, k-1).
        \]

        Each of the $3 - i$ generalized neighbors which are a neighbor of $t$ induces a cycle when $v$ is selected so we instead remove the vertex. This results in the branching vector
        \[
            (\wA + i(\wBeta - \wB) + (3 - i)(\wB + \wK), \wA + \wK).    
        \]

        When $i = 3$ we take care of each case separately depending on the generalized degree of $u_1$ and the structure of its neighbors. Let gd$(u_1) \leq $ gd$(u_2) \leq$ gd$(u_3)$. If gd$(u_2) = 2$ then let $x_1$ and $x_2$ be the two generalized neighbors.

        \begin{enumerate}
            \item gd$(u_1) = 0$. Every maximal forest $X \in \cM_G(F)$ will contain $u_1$, so $f(G,F,k) = f(G, F \cup \{u_1\}, k)$. This doesn't induce a branching vector.

            \item gd$(u_1) = 1$. Let $x$ be the generalized neighbor of $u_1$. Then every forest $X$ from $\cM_G(F)$ satisfies one fo the following conditions:
            \begin{itemize}
                \item either $X$ contains $v$;
                \item or $X$ does not contain $v$ but contains $u_1$;
                \item or $X$ does not contain $v$ and $u_1$ but contains $x$.
            \end{itemize}
            This means the number of maximal induced forests is at most
            \[
                f(G, F \cup \{v\}, k) + f(G \bs \{v\}, F \cup \{u_1\}, k-1) + f(G \bs \{v, u_1\}, F \cup \{x\}, k - 2).
            \]
            Since $x$ has weight at least $\wA$, then this will induce the following branching vector
            \[
                (\wA + 3(\wBeta - \wB), \wA + \wBeta + \wK, 2\wA + \wBeta + 2 \wK).
            \]

            
            \item gd$(u_1) = 2$, $x_1 \in N(t)$ and $x_1$ is also a generalized neighbor of either $u_2$ or $u_3$ (or both).
            We then have that every forest $X$ from $\cM_G(F)$ satisfies one of the following conditions:
            \begin{itemize}
                \item either $X$ contains $v$ and $x_1$ but does not contain $u_1$ and either $u_2$ or $u_3$ (or both);
                \item or $X$ contains $v$ and does not contain $x_1$;
                \item or $X$ does not contain $v$.
            \end{itemize}

            This means the number of maximal induced forests is at most
            \[
                f(G \bs \{u_1, u_2\}, F \cup \{v, x_1\}, k-2) + f(G \bs \{x_1\}, F \cup \{v\}, k-1) + f(G \bs \{v\}, F, k - 1).
            \]

            Since $x_1$ is a neighbor of $t$, and was chosen instead of $v$, then this means that $t$ has weight $\wA$, and has 3 generalized neighbors, two of which are $u_1$ and either $u_2$ or $u_3$. If at least three generalized neighbors of $x_1$ is $u_1$, $u_2$ and $u_3$ we have a more desirable branching, hence we assume that at most 2 generalized neighbors of $v$ are also generalized neighbors of $x_1$. But since at least 1 generalized neighbor of $x_1$ is not a generalized neighbor of $v$, then we gain at least a $\wBeta - \wB$ reduction when $x_1$ is chosen into the forest $X$.
            
            This case results in the branching vector
            \[
                (\wA + (\wA + (\wBeta - \wB)) + 2(\wBeta + \wK) ,\wA + \wA + 3(\wBeta - \wB) + \wK, \wA + \wK )
            \]


            \item gd$(u_1) = 2$, and $x_1 \in N(t)$ and $x_2 \in N(t)$. We note that due to previous cases, $x_1$ and $x_2$ are only generalized neighbors of $u_1$ and not $u_2$ or $u_3$. Then every forest $X$ from $\cM_G(F)$ satisfies one of the following conditions:
            \begin{itemize}
                \item either $X$ contains $v$ and $u_1$ but not $x_1$ and $x_2$;
                \item or $X$ contains $v$ but not $u_1$;
                \item or $X$ does not contain $v$.
            \end{itemize}

            This means the number of maximal induced forests is at most
            \[
                f(G \bs \{x_1, x_2\}, F \cup \{v, u_1\}, k-2) + f(G \bs \{u_1\}, F \cup \{v\}, k-1) + f(G \bs \{v\}, F, k - 1).
            \]
            In the first case, selecting $v$ of weight $\wA$ also reduces the measure by $2(\wBeta - \wB)$, one for each of $u_2$ and $u_3$. Selection of $u_1$ reduces the measure by $\wB$ and removing $x_1$ and $x_2$ results in a reduction of at least $2(\wA + \wK)$. The second case also selects $v$, but removes $u_1$ for a measure decrease of $\wBeta + \wK$. The third case just removes $v$, for a total decrease of $\wA + \wK$.
            which results in the branching vector
            \[
                (\wA + 2(\wBeta - \wB) + \wBeta + 2 (\wA + \wK), \wA + 2(\wBeta - \wB) + \wBeta + \wK , \wA + \wK).
            \]
            
            \item gd$(u_1) = 2$, and previous subcases don't apply. At least one of $x_1$ and $x_2$ has weight $\wBeta$ otherwise we would be in case (b). Let $x_1$ has weight $\wBeta$. Then every forest $X$ from $\cM_G(F)$ satisfies one of the following conditions:
            \begin{itemize}
                \item either $X$ contains $v$;
                \item or $X$ does not contain $v$ but contains $u_1$;
                \item or $X$ does not contain $v$ and $u_1$ but contains $u_2$ and $x_1$;
                \item or $X$ does not contain $v$, $u_1$ and $x_1$ but contains $u_2$ and $x_2$;
                \item or $X$ does not contain $v$, $u_1$ and $u_2$ but contains $u_3$ and $x_1$;
                \item or $X$ does not contain $v$, $u_1$, $u_2$ and $x_1$ but contains $u_3$ and $x_2$.
            \end{itemize}
            Now we consider the weight of $x_2$, which is only ever selected into the forest. In these cases, if $x_2$ has weight $\wBeta$ we simply reduce the measure by $\wBeta$. Now $x_2$ cannot has weight $\wA$ since case 6 did not occur. So if $x_2$ is of weight $\wA$, it has at least 3 generalized neighbors of which only 1 is $u_1$ and $u_2$ and $u_3$, are not generalized neighbors of $x_2$. This means that when $x_1$ is selected into the forest $X$ we have another 2 $(\wBeta - \wB)$ reductions in the worst case.

            To simplify our analysis into a single branching vector, we enforce that the $\wBeta$ weight reduction is the worst case
            \[
                \wBeta \leq \wA + 2(\wBeta - \wB).
            \]
            This effectively means that if vertex $x_2$ is ever selected, then in the worst case there is a reduction of measure of at least $\wBeta$.
            We thus obtain the branching vector
            \begin{align*}
                (\wA + 3(\wBeta - \wB), \wA + \wBeta + \wK, \wA + 3\wBeta + 2\wK, \\
                \wA + 4\wBeta + 3\wK, \wA + 4\wBeta + 3\wK, \wA + 5\wBeta + 4\wK).
            \end{align*}
            
            \item gd$(u_1) \geq 3$. Every forest $X$ from $\cM_G(F)$ satisfies one of the following conditions
            \begin{itemize}
                \item either $X$ contains $v$
                \item or $X$ does not contain $v$.
            \end{itemize}

           This means the number of maximal induced forests is at most
            \[
                f(G, F \cup \{v\}, k) + f(G \bs \{v\}, F, k - 1).
            \]

            Since all of $u_i$ for $i = 1,2,3$ have at least 3 generalized neighbors, then we obtain the branching vector
            \[
                (\wA + 3(\wBeta - \wA), \wA + \wK).
            \]
        \end{enumerate}

        \item gd\emph{$(v) \geq 4$.} Every forest $X$ from $\cM_G(F)$ either contains $v$ or doesn't contain $v$. Hence the number of forests is upper bounded by
        \[
            f(G, F \cup \{v\}, k) + f(G \bs \{v\}, F, k - 1)
        \]
        which results in the branching vector
        \[
            (\wA + 4(\wBeta - \wB), \wA + \wK).
        \]
    \end{enumerate}


    \subsection{Results}

    \begin{restatable}{theorem}{thmMFVSresult}
        For a graph $G$ with $n$ vertices, all minimal feedback vertex sets can be enumerated in time \rtenumfvs.
    \end{restatable}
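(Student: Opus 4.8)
The plan is to recognize the statement as an instance of our enumeration framework. Let $\Phi$ be the implicit set system with $U_I = V(G)$ and $\cF_I$ the set of minimal feedback vertex sets of $G$, and show that $\Phi$ is \emph{efficiently $(b,c)$-uniform} with $b = 2^{\wBeta}$ and $c = 2^{\wK}$ for the weights used in the analysis above, chosen so that $2^{\wBeta}\le 1.7183$ and $2^{\wK}\le 1.1552$. Theorem~\ref{thm:main4} then enumerates $\cF_I$ in time $\bigl(1+2^{\wBeta}-2^{-\wK}\bigr)^{n+o(n)}\bitsize^{O(1)}$; since $1+2^{\wBeta}-2^{-\wK} < 1.8527$, for large $n$ the exponential gap between $1.8527^{n}$ and $\bigl(1+2^{\wBeta}-2^{-\wK}\bigr)^{n}$ absorbs the $2^{o(n)}$ factor, giving the claimed bound \rtenumfvs.

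To invoke Theorem~\ref{thm:main4} I first argue that \mif, read as an enumeration procedure, outputs every maximal induced forest of $G$ containing a prescribed acyclic set $F$, each exactly once. Proposition~\ref{prop:1} lets me assume $F$ is independent (the identifications being reversible), Proposition~\ref{prop:3} handles the leaf case $V_t\bs F = N(t)$ by reducing to maximal independent sets of the auxiliary graph $H$, and the remaining cases split on $\mathrm{gd}(v)$ and, within $\mathrm{gd}(v)\in\{2,3\}$, further on $\mathrm{gd}(u_1)$ and the local structure; I would verify that these cases are exhaustive and that the displayed recurrences enumerate each maximal forest extending $F$ exactly once. Since the minimal feedback vertex sets of $G$ are precisely the complements of its maximal induced forests, running \mif on $G - X$ with $F = \emptyset$ and filtering, in polynomial time per output, those maximal induced forests of $G-X$ of size exactly $(n-|X|)-k$ whose complement in $G$ is a minimal feedback vertex set of $G$, produces exactly $\cF^{k}_{I,X}$; over-enumeration is harmless, since we only need $|\cF^{k}_{I,X}|$ to be bounded by, and its elements listed within, $b^{n-|X|}c^{k}\bitsize^{O(1)}$ time.

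The running-time bound on \mif is where the real work lies. I would apply Lemma~\ref{lem:11} with the measure $\mu(G,F,t) = |A|\wA + |N(t)\bs(F\cup A)|\wB + |V\bs(F\cup N(t))|\wBeta + k\cdot\wK$, so that the recurrences for $F=\emptyset$, for $\mathrm{gd}(v)\in\{1,2,3\}$ with all of their subcases (including those parameterized by the number $i$ of $\wB$-weight neighbours of $v$ and by $\mathrm{gd}(u_1)$), and for $\mathrm{gd}(v)\ge4$ each yield a branching vector $(\delta_1,\dots,\delta_j)$ with associated constraint $\sum_{\ell}2^{-\delta_\ell}\le 1$. I would then exhibit explicit values $0\le\wA\le\wB\le\wBeta$ and $\wK>0$ satisfying simultaneously all of these constraints, the auxiliary inequalities imposed to collapse worst cases (for instance $\wBeta\le\wA+2(\wBeta-\wB)$ and $\wBeta\le\wB+(\wBeta-\wB)$), and, for the leaf case $V_t\bs F=N(t)$, a choice of $\sergeGamma\in[2\log_2 3-3,\,1]$ with $\sergeBeta=(1-\sergeGamma)/2$ for which the bound $2^{\sergeBeta\,|V(H)|+\sergeGamma k}$ from Theorem~\ref{thm:mvc} does not exceed $2^{\mu}$. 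Because the initial call to \mif has $F=\emptyset$ and no active vertex, its measure equals $\wBeta n + \wK k$, so Lemma~\ref{lem:11} gives running time $2^{\wBeta n+\wK k}\bitsize^{O(1)} = O(1.7183^{n}\cdot 1.1552^{k})$; applied to $G-X$ this becomes $1.7183^{\,n-|X|}\,1.1552^{k}\,\bitsize^{O(1)}$, which is exactly efficient $(b,c)$-uniformity.

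The main obstacle is the optimization in the previous paragraph. Unlike a standard Measure and Conquer analysis, the framework asks us to minimize $1+2^{\wBeta}-2^{-\wK}$ rather than $2^{\wBeta}$ alone, so the objective over the feasible region is non-convex; moreover the branching vectors are numerous, several carry the free parameter $i$ or subcase-dependent worst-case weights for $x_1,x_2$, and they must be balanced against the coupling between $\wBeta,\wK$ and $\sergeGamma$ forced by the $V_t\bs F=N(t)$ case. I expect this step to require a numerical search for the weights followed by a case-by-case verification that every listed branching vector satisfies $\sum_{\ell}2^{-\delta_\ell}\le1$ at the chosen weights; by contrast the correctness of the case split, the reduction to $G-X$, and the final appeals to Theorems~\ref{thm:main4} and~\ref{thm:mvc} are routine.
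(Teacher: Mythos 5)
Your proposal follows essentially the same route as the paper: the same measure $\mu(G,F,t)$, the same Measure and Conquer analysis of the modified \mif via Lemma~\ref{lem:11}, the same reduction that enumerates $\cF^k_{I,X}$ by running the enumerator on $G[V\bs X]$ and filtering in polynomial time, and the same final appeal to Theorem~\ref{thm:main4}; the paper simply supplies the explicit weights $(\wA,\wB,\wBeta,\wK)=(0.4506859777,\,0.4233244855,\,0.7809613776,\,0.2081356098)$ that you defer to a numerical search. Your remark that the $2^{o(n)}$ overhead from Theorem~\ref{thm:main4} is absorbed because $1+2^{\wBeta}-2^{-\wK}$ is strictly below the rounded constant $1.8527$ is correct and made only implicitly in the paper.
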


    \begin{proof}
        We evaluate the running time of the proposed algorithm above using the stated measure $\mu$. It can be shown that the weights
        \[
            \wA = 0.4506859777, \quad
            \wB = 0.4233244855, \quad
            \wBeta = 0.7809613776, \quad
            \wK = 0.2081356098, \quad
        \]
        satisfy all stated branching factors and constraints necessary.
        The number of maximal induced forests containing $F$ of size at least $n - k$ is upper bounded by 
        \[
            f(\mu, k) \leq 2^{\wBeta} \cdot 2^{\wK} \leq 1.7183^n 1.1552^k
        \]
        This results in a $O(1.7183^n 1.1552^k)$ algorithm for enumerating the maximal induced forests of size at least $n-k$, and also enumerating minimal feedback vertex sets of size $k$. 
        Consider now the enumeration of the collection
        \[
            \cF_{I,X}^k = \{S \subseteq V \backslash X : |S| = k \text{ and $S \cup X$ is a minimal FVS} \}.
        \]
        By running the new algorithm just described on the subgraph $G[V \bs X]$ that remains after removing the vertices of $X$, we enumerate all minimal feedback vertex sets of size $k$ in time $1.7183^n \cdot 1.1552^k \cdot N^{O(1)}$. For every minimal feedback vertex set $S$ that was just enumerated, we can check in polynomial time if $S \cup X$ is also a minimal feedback vertex set. This means that the collection $\cF_{I,X}^k$ can thus also be enumerated in time $1.7183^{n - |X|}  \cdot 1.1552^k \cdot N^{O(1)}$.
        Combined with Theorem \ref{thm:main4} this results in a \rtenumfvs deterministic algorithm for the number of minimal feedback vertex sets in a graph $G$.
    \end{proof}

\section{Minimal Hitting Sets}
    Based on \cite{cochefert2016faster} we once again apply a multivariate analysis to enumerating all minimal hitting sets on a hypergraph of rank 3.

    \subsection{Measure}

    We conduct a new analysis using the algorithm described in \cite{cochefert2016faster}, by first deriving a similar measure. Let $H$ be a hypergraph of rank 3 and $k$ be an upper bound on the size of the hitting set $S \subseteq V$. Denote by $n_i$ the number of vertices of degree $i \in \mathbb{N}$ and $m_i$ the number of hyperedges of size $i \in \{0,...,3\}$. Also denote $m_{\leq i} := \sum_{j = 0}^i m_j$. Then a measure for $H$ and a given $k$ is
    \[
        \mu(H, k) = \varPsi(m_{\leq 2}) + \sum_{i = 0}^\infty w_i n_i + w_k \cdot k
    \]
    where $\varPsi : \mathbb{N} \to \mathbb{R}_{\geq 0}$ is a non-increasing non-negative function independent of $n$, and $\omega_i$ are non-negative reals. Clearly $\mu(H, k) \geq 0$. 

    We make the same simplifying assumptions as \cite{cochefert2016faster} which provides the constraints
    \begin{align}
        \omega_i &:= \omega_5, &\varPsi(i) &:= 0 & \text{ for each } i \geq 6 \\
        \Delta \omega_i &:= \omega_i - \omega_{i-1},  &\Delta \varPsi(i) &:= \varPsi(i) - \varPsi(i-1) & \text{ for each } i \geq 1 \\
        0 &\leq \delta \omega_{i+1} \leq \Delta \omega_i, &\text{ and } 0 &\geq \Delta \varPsi(i+1) \geq \Delta \varPsi(i) & \text{ for each } i \geq 1
    \end{align}

    \noindent
    Further branching rules will add constraints on the measure. Denote $T(\mu) := 2^{\mu}$ as an upper bound on the number of leaves of the search tree modelling the recursive algorithm for all $H$ with $\mu(H) \leq \mu$.

    \subsection{Analysis}

    \begin{restatable}{theorem}{thmMHSresult}
        For a hypergraph $H$ with $n$ vertices and rank 3, all minimal hitting sets can be enumerated in time \rthittingset.
    \end{restatable}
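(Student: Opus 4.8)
The plan is to mirror the two preceding case studies and re-analyze the branching algorithm of Cochefert et al.~\cite{cochefert2016faster} for enumerating all minimal hitting sets of a rank-$3$ hypergraph, replacing their univariate measure by the measure $\mu(H,k)=\varPsi(m_{\le 2})+\sum_i \omega_i n_i + w_k\cdot k$ introduced above, whose last term tracks the remaining budget $k$ for the hitting set. First I would recall the structure of that algorithm: on an instance $(H,k)$ it either (i) halts in a base case (empty hypergraph, a hyperedge of size $0$, some low-degree or small local configuration resolved directly, or an instance reducible to a polynomially solvable one), or (ii) selects a vertex or hyperedge according to a fixed priority list of cases and branches, in each branch either \emph{including} a vertex in the hitting set --- which decreases some $n_i$, possibly decreases $m_{\le 2}$ as incident hyperedges shrink or vanish, and decreases $k$ by one --- or \emph{discarding} a vertex --- which decreases some $n_i$ and shrinks the incident hyperedges. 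For every such branching rule I would re-derive the branching vector under $\mu$: the reductions coming from degree drops $\Delta\omega_i$ and from $\Delta\varPsi$ when a hyperedge drops to size two or below are exactly those of~\cite{cochefert2016faster}, and the only change is that every branch which puts a vertex into the hitting set now contributes an extra $w_k$ to its coordinate. This produces, for each case, a constraint $\sum_j 2^{-\delta_j}\le 1$ with the $\delta_j$ linear in the $\omega_i$, the $\varPsi$-differences, and $w_k$, to be adjoined to the structural constraints on $\omega$ and $\varPsi$ already stated.

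Next, as in the earlier case studies, the quantity our framework tells us to minimize is not $2^{\omega_5}$ but $1+b-\tfrac1c$ with $b=2^{\omega_5}$ and $c=2^{w_k}$: since $\varPsi$ is bounded independently of $n$ and $\sum_i n_i=n$ with $\omega_0\le\dots\le\omega_5$, on instances with empty prefix we have $\mu\le \omega_5 n + w_k k + O(1)$, so the number of search-tree leaves, hence of enumerated sets, is $2^{\omega_5 n}2^{w_k k} 2^{O(1)}$. I would set up this non-convex optimization together with all the branching and structural constraints and exhibit an explicit feasible assignment of $\omega_0,\dots,\omega_5$, $\varPsi(0),\dots,\varPsi(5)$, and $w_k$ for which $2^{\omega_5}\le 1.5135$ and $2^{w_k}\le 1.1754$, i.e.\ a running time of $O(1.5135^n\cdot 1.1754^k)\,N^{O(1)}$ for enumerating all minimal hitting sets of size at most $k$ (polynomial work per leaf).

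From this multivariate bound I would then establish that the implicit set system $\Phi$ whose feasible sets are the minimal hitting sets of the input hypergraph is efficiently $(b,c)$-uniform with $b=1.5135$, $c=1.1754$: given $(H,X,k)$, delete from $H$ every hyperedge already hit by $X$ and remove the vertices of $X$, run the algorithm above on the resulting rank-$3$ hypergraph on $n-|X|$ vertices to list, in time $1.5135^{\,n-|X|}\cdot 1.1754^{\,k}\,N^{O(1)}$, all its minimal hitting sets of size exactly $k$, and for each such $S$ check in polynomial time whether $S\cup X$ is a minimal hitting set of $H$; this enumerates $\cF^k_{I,X}$ within the required bound. Theorem~\ref{thm:main4} then yields an algorithm enumerating all of $\cF_I$ in time $(1+b-\tfrac1c)^{n+o(n)}N^{O(1)}$, and since $1+1.5135-1/1.1754<1.6627$ this gives the claimed \rthittingset running time.

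The main obstacle I expect is the branching analysis with the extra budget coordinate: the algorithm of~\cite{cochefert2016faster} has a long case list, and in several cases the worst-case drop in $m_{\le 2}$ and in the various $n_i$ depends on delicate local structure (shared neighbours, hyperedge sizes, whether a shrinking hyperedge falls below size three, etc.). Correctly re-deriving each branching vector with the added $+w_k$ terms, and then finding weights that simultaneously satisfy every resulting inequality while driving the non-convex objective $1+2^{\omega_5}-2^{-w_k}$ below the target, is where the real work lies; the reduction from the extension variant and the invocation of Theorem~\ref{thm:main4} are routine once the multivariate running time is established.
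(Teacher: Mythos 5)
Your proposal follows exactly the paper's route: re-run the Cochefert et al.\ algorithm with the measure $\varPsi(m_{\le 2})+\sum_i\omega_i n_i+w_k k$, add $w_k$ to each branch that commits a vertex to the hitting set, optimize the non-convex objective $1+2^{\omega_5}-2^{-w_k}$ to get the multivariate bound $O(1.5135^n\cdot 1.1754^k)$, reduce the extension problem by deleting $X$ and its satisfied hyperedges, and invoke Theorem~\ref{thm:main4}. This is the same argument, down to the explicit target constants, so the proposal matches the paper's proof.
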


    \begin{proof}
        We follow the algorithm and rules from \cite{cochefert2016faster}, but adding an additional weight $w_k$ to the decrease in measure every time a vertex is to be selected into $S$, the partial hitting set for the hypergraph $H$. This is applied across all cases and constraints as outlined in the algorithm. All constraints are satisfied with the given weights for the measure $\mu$.

        \begin{figure}[ht]
            \centering
            \begin{tabular}{l l l}
                $i$ & $\omega_i$ & $\varPsi(i)$ \\
                \hline
                0 & 0            & 0.597858842  \\
                1 & 0.506736167  & 0.396229959  \\
                2 & 0.582087585  & 0.238416719  \\
                3 & 0.595274776  & 0.094238670  \\
                4 & 0.597858842  & 0.030265172  \\
                5 & 0.597858842  & 0.004148530  \\
                6 & 0.597858842  & 0            \\
            \end{tabular}
            \begin{tabular}{c}
                $w_k = 0.2330535427$
            \end{tabular}
        \end{figure}
        \noindent
        The number of leaves in the search tree is upper bounded $T(\mu) \leq 2^{w_6 \cdot n + w_k \cdot k}$. These weights result in a multivariate running time of $O(1.5135^n \cdot 1.1754^k)$ for enumerating minimal hitting sets of size at most $k$ in rank 3 hypergraphs. Then the collection
        \[
            \cF_{I,X}^k = \{S \subseteq V \backslash X : |S| = k \text{ and $S \cup X$ is a minimal 3-HS} \}
        \]
        can be enumerated in time $1.5135^{n - |X|} \cdot 1.1754^k \cdot N^{O(1)}$.
        Combined with Theorem \ref{thm:main4} this results in an algorithm for enumerating minimal hitting sets in rank 3 hypergraphs in \rthittingset.

    \end{proof}

\section{Conclusion}
    The main contribution of this paper is a framework allowing us to turn many $b^n c^k \bitsize^{O(1)}$ time algorithms for subset and subset enumeration problems into $(1+b-\frac{1}{c})^n N^{O(1)}$ time algorithms, generalizing a recent framework of Fomin et al. \cite{FominGLS16}.

    The main complications in using the framework are, firstly, that new algorithms or running-time analyses are often needed, and, secondly, that such analyses need solutions to non-convex programs in the Measure and Conquer framework. In the usual Measure and Conquer analyses \cite{FominGK09},the objective is to upper bound a single variable ($\alpha$) which upper bounds the exponential part of the running time ($2^{\alpha n}$) subject to convex constraints. Thus, it is sufficient to solve a convex optimization problem to minimize the running time \cite{gaspers2010exponential,GaspersS12} resulting from the constraints derived from the analysis. Here, the objective function ($2^{\alpha}-2^{-w_k}$) is non-convex. While experimenting with a range of solvers, either guaranteeing to find a global optimum (slow and used for optimality checks) or only a local optimum (faster and used mainly in the course of the algorithm design), we experienced on one hand that the local optima found by solvers are often the global optimum, but on the other hand that weakening non-tight constraints can sometimes lead to a better globally optimum solution.

\paragraph*{Acknowledgments}
    We thank Daniel Lokshtanov, Fedor V. Fomin, and Saket Saurabh for discussions inspiring some of this work.
    Serge Gaspers is the recipient of an Australian Research Council (ARC) Future Fellowship (FT140100048) and acknowledges support under the ARC's Discovery Projects funding scheme (DP150101134).

\bibliographystyle{plain}
\bibliography{main}


\end{document}